\documentclass[aps,prx,nofootinbib,twocolumn,superscriptaddress,10pt]{revtex4-2}
\usepackage{mathtools}
\usepackage{mathrsfs}
\usepackage[ruled, vlined, linesnumbered]{algorithm2e}
\usepackage[table]{xcolor}
\usepackage{bbm,bm}
\usepackage{amsmath,amsfonts,amssymb}
\usepackage{array,graphicx,multirow,tcolorbox,relsize,booktabs,dsfont}
\usepackage[utf8]{inputenc}
\usepackage[T1]{fontenc}
\usepackage{pifont}
\usepackage{url}
\usepackage{theorem}

\newtheorem{proposition}{Proposition}
\newtheorem{lemma}[proposition]{Lemma}

\newtheorem{theorem}[proposition]{Theorem}

\newenvironment{proof}{\noindent \textit{{Proof.}}}{\hfill $\square$}

\newcommand{\nc}{\newcommand}
\nc{\rnc}{\renewcommand}
\nc{\lbar}[1]{\overline{#1}}
\nc{\bra}[1]{\langle#1|}
\nc{\ket}[1]{|#1\rangle}
\nc{\dketbra}[2]{\vert #1 \rangle \hspace{-.8mm} \rangle \hspace{-.4mm} \langle\hspace{-.8mm}\langle #2 \vert}
\nc{\dbra}[1]{\langle\hspace{-.8mm}\langle #1\vert}
\nc{\dket}[1]{\vert#1\rangle\hspace{-.8mm}\rangle}
\nc{\ketbra}[2]{|#1\rangle\!\langle#2|}
\nc{\braket}[2]{\langle#1|#2\rangle}
\nc{\braandket}[3]{\langle #1|#2|#3\rangle}

\nc{\rank}{\operatorname{rank}}
\nc{\tr}{\operatorname{Tr}}
\nc{\ox}{\otimes}
\nc{\cA}{{\cal A}}
\nc{\cB}{{\cal B}}
\nc{\cC}{{\cal C}}
\nc{\cD}{{\cal D}}
\nc{\cE}{{\cal E}}
\nc{\cF}{{\cal F}}
\nc{\cG}{{\cal G}}
\nc{\cH}{{\cal H}}
\nc{\cI}{{\cal I}}
\nc{\cJ}{{\cal J}}
\nc{\cK}{{\cal K}}
\nc{\cL}{{\cal L}}
\nc{\cM}{{\cal M}}
\nc{\cN}{{\cal N}}
\nc{\cO}{{\cal O}}
\nc{\cP}{{\cal P}}
\nc{\cQ}{{\cal Q}}
\nc{\cR}{{\cal R}}
\nc{\cS}{{\cal S}}
\nc{\cT}{{\cal T}}
\nc{\cU}{{\cal U}}
\nc{\cV}{{\cal V}}
\nc{\cX}{{\cal X}}
\nc{\cY}{{\cal Y}}
\nc{\cZ}{{\cal Z}}
\nc{\cW}{{\cal W}}

\nc{\sfA}{{\mathsf A}}
\nc{\sfB}{{\mathsf B}}
\nc{\sfC}{{\mathsf C}}
\nc{\sfD}{{\mathsf D}}
\nc{\sfE}{{\mathsf E}}
\nc{\sfF}{{\mathsf F}}
\nc{\sfG}{{\mathsf G}}
\nc{\sfH}{{\mathsf H}}
\nc{\sfI}{{\mathsf I}}
\nc{\sfJ}{{\mathsf J}}
\nc{\sfK}{{\mathsf K}}
\nc{\sfL}{{\mathsf L}}
\nc{\sfM}{{\mathsf M}}
\nc{\sfN}{{\mathsf N}}
\nc{\sfO}{{\mathsf O}}
\nc{\sfP}{{\mathsf P}}
\nc{\sfQ}{{\mathsf Q}}
\nc{\sfR}{{\mathsf R}}
\nc{\sfS}{{\mathsf S}}
\nc{\sfT}{{\mathsf T}}
\nc{\sfU}{{\mathsf U}}
\nc{\sfV}{{\mathsf V}}
\nc{\sfX}{{\mathsf X}}
\nc{\sfY}{{\mathsf Y}}
\nc{\sfZ}{{\mathsf Z}}
\nc{\sfW}{{\mathsf W}}

\nc{\idop}{{\mathds{1}}}
\nc{\ve}{{\varepsilon}}

\nc{\RR}{{{\mathbb R}}}
\nc{\CC}{{{\mathbb C}}}
\nc{\FF}{{{\mathbb F}}}
\nc{\NN}{{{\mathbb N}}}
\nc{\MM}{{{\mathbb M}}}
\nc{\ZZ}{{{\mathbb Z}}}
\nc{\PP}{{{\mathbb P}}}
\nc{\QQ}{{{\mathbb Q}}}
\nc{\UU}{{{\mathbb U}}}
\nc{\EE}{{{\mathbb E}}}
\nc{\id}{{\operatorname{id}}}

\nc{\Choi}{Choi-Jamio\l{}kowski }
\nc{\CPTP}{\text{\rm CPTP}}
\nc{\tB}{\widetilde{B}}
\nc{\tA}{\widetilde{A}}
\nc{\tE}{\widetilde{E}}
\nc{\tS}{\widetilde{S}}
\nc{\bB}{\Bar{B}}
\nc{\bA}{\Bar{A}}

\allowdisplaybreaks

\newcommand{\cmark}{\ding{51}}%
\newcommand{\xmark}{\ding{55}}%
\nc{\pT}{{{\mathsf{T}}}}
\nc{\sym}{{\operatorname{sym}}}
\nc{\asym}{{\operatorname{asy}}}

\nc{\SEQ}{\texttt{SEQ}}
\nc{\PAR}{\texttt{PAR}}
\nc{\Comb}{\textsf{Comb}}
\nc{\dia}{\Diamond}
\nc{\bdia}{\blacklozenge}

\definecolor{refcolor}{rgb}{0.067,0.5,0.65}
\definecolor{urlcolor}{rgb}{0.1,0,0.9}
\usepackage[
    breaklinks,
    pdftex,
    colorlinks=true,
    linkcolor=blue,
    citecolor=blue,
    urlcolor=blue
]{hyperref}
\usepackage{cleveref}
\begin{document}

\title{Simulation of Adjoints and Petz Recovery Maps for Unknown Quantum Channels}

\author{Chengkai Zhu}
\thanks{These two authors contributed equally.}
\author{Ziao Tang}
\thanks{These two authors contributed equally.}

\author{Guocheng Zhen}
\affiliation{Thrust of Artificial Intelligence, Information Hub, The Hong Kong University of Science and Technology (Guangzhou), Guangzhou 511453, China}

\author{Yinan Li}
\email{yinan.li@whu.edu.cn}
\affiliation{School of Artificial Intelligence, Wuhan University, Hubei 430072, China}
\affiliation{Hubei Center for Applied Mathematics, Hubei 430072, China}
\affiliation{Hubei Key Laboratory of Computational Science, Hubei 430072, China}
\affiliation{Wuhan Institute of Quantum Technology, Hubei 430206, China}

\author{Ge Bai}
\email{gebai@hkust-gz.edu.cn}
\author{Xin Wang}
\email{felixxinwang@hkust-gz.edu.cn}
\affiliation{Thrust of Artificial Intelligence, Information Hub, The Hong Kong University of Science and Technology (Guangzhou), Guangzhou 511453, China}

\begin{abstract}
Transformations of quantum channels, such as the transpose, complex conjugate, and adjoint, are fundamental to quantum information theory. Given access to an unknown channel, a central problem is whether these transformations can be implemented physically with quantum supermaps. While such supermaps are known for unitary operations, the situation for general quantum channels is fundamentally different. 
In this work, we establish a strict hierarchy of physical realizability for the transposition, complex conjugation, and adjoint transformation of an unknown quantum channel. We present a probabilistic protocol that exactly implements the transpose with a single query. In contrast, we prove no-go theorems showing that neither the complex conjugate nor the adjoint can be implemented by any completely positive supermap, even probabilistically. We then overcome this impossibility by designing a virtual protocol for the complex conjugate based on quasi-probability decomposition, and show its optimality in terms of the diamond norm. As a key application, we propose a protocol to estimate the expectation values resulting from the Petz recovery map of an unknown channel, achieving an improved query complexity compared to existing methods.
\end{abstract}
\maketitle

\section{Introduction}

The theory of quantum information has largely centered on the control and manipulation of quantum states. A broader paradigm is to manipulate quantum processes themselves, the dynamical maps that govern state evolution. This \textit{second-order} control, formalized by quantum supermaps~\cite{Chiribella_2008,Chiribella_2008a}, asks which transformations of unknown physical processes are fundamentally possible. A canonical testbed for this theory comprises the following transformations of an unknown channel: the adjoint, complex conjugate, and transpose. The adjoint is particularly indispensable. By the duality relation
$\tr[O\,\cN(\rho)] = \tr[\cN^\dagger(O)\,\rho]$, it governs Heisenberg-picture evolution~\cite{Angrisani2025,Schuster2025} and, operationally, corresponds to the ``backward time evolution'' needed to measure out-of-time-ordered correlators (OTOCs), a key diagnostic of quantum scrambling in open systems~\cite{Syzranov2018,Zanardi2021,Schuster2023,Bose_2024,Xu2024}. The adjoint also underpins the Petz recovery map~\cite{Petz1986,PETZ1988,1993Qeai,Barnum2002,Hayden_2004}, and thus the characterization of reversibility for quantum dynamics, which is central to near-optimal error-correction protocols~\cite{Ng2010,Tyson2010,Biswas_2024} and to thermodynamic limits of quantum information processing~\cite{JORDAN20102075}. The special case of obtaining the adjoint (namely the inverse) of unitary channels is essential for quantum algorithms such as quantum phase processing~\cite{Wang2023} and quantum singular value transformation~\cite{gilyen2019quantum}.
Besides applications in quantum information and algorithms, such transformations are essential for grounding the mathematical theory of time-reversal in open quantum systems \cite{Chiribella_2021,Buscemi2021,aw2021fluctuation}. 
This leads to a fundamental question: given a black-box device implementing an unknown channel $\cN_{A\to B}$, can an experimentalist implement its complex conjugate, transpose, or adjoint?

For unitary evolutions, this question is largely understood. In particular, mapping from $U$ to $U^\dag$ corresponds to time reversal. Chen et al.~\cite{chen2025quantumalgorithmreversingunknown} gave an exact circuit-level protocol that, with $O(d^2)$ queries to the unknown unitary, deterministically and exactly reverses an unknown $ d$-dimensional unitary, and is proven to be optimal in query complexity~\cite{odake2024}. There have also been extensive studies on probabilistic and approximate protocols~\cite{Chiribella_2016,Quintino_2019,Quintino2019,Michal2019}.

However, for general open-system dynamics described by quantum channels, the situation changes qualitatively. For a generic quantum channel $\cN_{A\to B}$, the conjugate $\cN^*_{A\to B}$ remains a quantum channel, i.e., a completely positive (CP) and trace-preserving (TP) map, while the transpose $\cN^\pT$ and the adjoint $\cN^\dag$ are generally not TP and thus are not channels. This raises a subtle and fundamental problem: can these non-TP maps nonetheless be realized operationally, i.e., perhaps probabilistically, or via a virtual (quasi-probabilistic) scheme?

In this paper, we resolve this question by establishing a strict hierarchy for the physical realizability of the adjoint, complex conjugate, and transpose of a black-box channel. We begin by demonstrating that while the channel transpose $\cN^\pT$ is physically realizable probabilistically via a post-selected teleportation protocol, the complex conjugate $\cN^*$ and adjoint $\cN^\dag$ are fundamentally different. We prove a no-go theorem ruling out their implementation by any completely positive supermap, even probabilistically. To overcome this physical barrier, we introduce a virtual protocol based on quasi-probability decompositions that enables universal estimation of these unrealizable maps. Together, our results (summarized in Table~\ref{tab:realization}) provide a complete picture for the physical implementation of universal complex conjugation, transpose, and adjoint of quantum channels, within the broader context of second-order quantum transformations. 

As an application, our method for simulating the adjoint map of an unknown quantum channel enables the implementation of the Petz recovery map using only black-box access, building upon the previous subroutines~\cite{gilyen2022quantum}. Specifically, we provide a quasi-sampling protocol to estimate the expectation value $\tr[O_A\cP_{\sigma,\cN}(\omega_B)]$ for any observable $O_A$, input state $\omega_B$ and any unknown quantum channel $\cN_{A\to B}$ given a prior state $\sigma$, where $\cP_{\sigma,\cN}$ denotes the Petz map. Our estimator achieves error at most $\varepsilon$ with probability at least $1-\delta$ using $O\left(\frac{d_A^4d_B^3}{ \varepsilon^2}\log\left(\frac{1}{\delta}\right)\right)$ queries to the unknown channel when the channel is unital, where $d_A$ and $d_B$ are the input and output dimensions of the channel, respectively. Notably, for expectation‑value estimation, this improves upon the deterministic approximation of the Petz map in Ref.~\cite{utsumi2025}, whose overall query complexity scales as $O\left(\frac{d_A^{5.5}d_B^{2.5}}{\varepsilon^4\lambda_{\min}^{3/2} } \log\left(\frac{1}{\delta}\right)\min{\left\{ \frac{1}{\tau^2_{\min}}, \frac{d_A^8 d_B^4}{\varepsilon^4 \lambda_{\min}^2 }\right\}}\right)$, where $\lambda_{\min}$ and $\tau_{\min}$ are channel related to parameters that are also dependent on $d_A,d_B$.

\begin{table}[t!]\label{tab:realization}
\centering
\renewcommand{\arraystretch}{1.2}
\begin{tabular}{>{\bfseries} l @{\hspace{10pt}} c @{\hspace{12pt}} c @{\hspace{12pt}} c}
\toprule
& \textbf{$\cN^*$} & \textbf{$\cN^{\pT}$} & \textbf{$\cN^\dagger$} \\
\midrule
\textbf{Quantum comb} & \xmark~(Thm.~\ref{thm:channel_conj_CPnogo}) & \xmark~(TP) & \xmark~(TP) \\
\textbf{Prob. comb} & \xmark~(Thm.~\ref{thm:channel_conj_CPnogo}) & \cmark~(FIG.~\ref{fig:channel_tele}) & \xmark~(Thm.~\ref{thm:channel_conj_CPnogo}) \\
\textbf{Virtual comb} & \cmark~(Thm.~\ref{thm:conj_HPTP}) & \xmark~(TP) & \xmark~(TP) \\
\bottomrule
\end{tabular}
\caption{Feasibility of universally realizing the complex conjugate $\cN^*$, the transpose $\cN^{\pT}$, and the adjoint $\cN^{\dag}$ of an unknown quantum channel $\cN$. Each row specifies the allowed class of higher-order transformations. "\xmark(TP)" indicates a no-go because the target map is generally not trace-preserving. "\cmark~(FIG.~\ref{fig:channel_tele})": the postselected teleportation for $\cN^{\pT}$. "\xmark~(Thm.~\ref{thm:channel_conj_CPnogo})": no-go for CP realization of $\cN^*$, and hence $\cN^{\dag}$. "\cmark~(Thm.~\ref{thm:conj_HPTP})": virtual-comb realization of $\cN^*$.
}
\end{table}

\section{Notation}
A quantum system $A$ is associated with a $d_A$-dimensional Hilbert space $\cH_A$. For a bipartite system $A'A$ where $\cH_{A'}\cong \cH_A$, we denote by $\Phi_{A'A}=\sum_{i,j=0}^{d-1}\ketbra{ii}{jj}_{A'A}$ the maximally entangled state on $A'A$, and by $F_{A'A}$ the SWAP operator, i.e., $F_{A'A}\ket{ij} = \ket{ji}$. We denote by $\mathscr{L}_{A}$ the set of linear operators on $\cH_A$, by $\mathscr{H}_A$ the set of Hermitian operators, by $\mathscr{P}_A$ the set of positive semidefinite operators, and by $\mathscr{D}_A$ the set of quantum states.

A linear map $\cN_{A\to B}$ from $\mathscr{L}_{A}$ to $\mathscr{L}_B$ is called completely positive (CP) if $(\cI_R\ox \cN_{A\to B})(X) \in \mathscr{P}_{RA}$ for any $X \in \mathscr{P}_{RA}$ and any ancillary system $R$, is called Hermitian-preserving (HP) if $\cN_{A\to B}(X) \in \mathscr{H}_A$ for any $X \in \mathscr{H}_A$, is called trace-preserving (TP) if $\tr [\cN_{A\to B}(X)] = \tr X$ for any $X \in\mathscr{L}_{A}$, and is called trace-nonincreasing (TN) if $\tr [\cN_{A\to B}(X)] \leq \tr X$ for any $X\in\mathscr{P}_A$. A quantum channel $\cN_{A\to B}$ is a CPTP linear map from $\mathscr{L}_{A}$ to $\mathscr{L}_{B}$. 
We denote by $\CPTP(A,B)$ the set of all quantum channels from $A$ to $B$. 
Throughout the paper, we will use calligraphic letters to denote linear maps and sans-serif versions to denote their corresponding \Choi operators. For a linear map $\cN_{A\to B}$, its Choi operator is defined by $\sfN_{A'B} = (\cI_{A'}\ox\cN_{A\to B})(\Phi_{A'A})$. For a channel with Kraus representation $\cN_{A\to B}(\cdot) = \sum_j K_j^{}(\cdot)K_j^\dag$, its \textit{complex conjugate}, \textit{transpose}, and \textit{adjoint} are respectively defined by 
\begin{equation}
\begin{aligned}
& \cN^*_{A\to B}(\cdot) = \sum_j K^*_j(\cdot) K_j^{\pT},\\
& \cN^{\pT}_{B\to A}(\cdot) = \sum_j K_j^{\pT}(\cdot)K_j^{*},\\
& \cN^{\dagger}_{B\to A}(\cdot) = \sum_j K_j^{\dag}(\cdot)K_j^{},
\end{aligned}
\end{equation}
where the complex conjugate and the transpose are defined with respect to a predetermined orthonormal basis.
Following this definition, it can be directly checked that the Choi operators of $\cN_{A\to B}^*, \cN_{B\to A}^{\pT}$ and $\cN_{B\to A}^\dag$ are given by $\sfN_{A'B}^{\pT},F_{A'B}\sfN_{A'B}F_{A'B}^\dag$, and $F_{A'B}\sfN_{A'B}^{\pT}F_{A'B}^\dag$, respectively.

A linear supermap $\cC_{(A\to B)\to (A'\to B')}$ transforms linear maps from $\mathscr{L}_{A}$ to $\mathscr{L}_{B}$ to linear maps from $\mathscr{L}_{A'}$ to $\mathscr{L}_{B'}$. A linear supermap $\cC_{(A\to B)\to (A'\to B')}$ is called HP if $\cC_{(A\to B)\rightarrow (A'\to B')}(\cN_{A\to B})$ is HP for any HP input $\cN_{A\to B}$, and is called CP if $(\mathbb{I}_R \ox \cC_{(A\to B)\rightarrow (A'\to B')})(\cN_{R_0A\to R_1B})$ is CP for any CP input $\cN_{R_0A\to R_1B}$, where the reference systems $R_0,R_1$ are arbitrary and $\mathbb{I}_R$ denotes the identity supermap for linear maps from $\mathscr{L}_{R_0}$ to $\mathscr{L}_{R_1}$. 
More generally, an $n$-slot supermap $\cC$ maps quantum channels $\cN^{(i)}_{I_i\to O_i}$ for $i=1,2,\dots,n$ to an output quantum channel $\cN_{P\to F}$. Since any such supermap is equivalent to a linear map $\cN^{\cC}_{PO_1\cdots O_n\to I_1\cdots I_nF}$, we call a supermap HP (CP, TP) if its corresponding linear map is HP (CP, TP)~\cite{Chiribella_2008a,Zhu_2024,xiao2025}. Notably, an $n$-slot CPTP supermap is known as an $n$-slot \textit{quantum comb}~\cite{Chiribella_2008}.

\section{Results}

\subsection{Probabilistic Transposition and No-Go for Complex Conjugation}

We first observe that the transpose $\cN^\pT_{B\to A}$ of an unknown quantum channel can be probabilistically realized via postselected teleportation~\cite{Lloyd2011}, also known as gate teleportation in the unitary case~\cite{Gottesman_1999,Quintino_2019}, as depicted in FIG.~\ref{fig:channel_tele}. The procedure is: i). Prepare a maximally entangled state $\Phi_{A'A}$ and the input state $\rho_{B'}$, ii). Apply the unknown channel $\cN_{A\to B}$ on system $A$, iii). Perform a measurement $\left(\Phi_{B'B} , \idop-\Phi_{B'B}\right)$ on systems $B'B$. An outcome corresponding to the projector onto $\Phi_{B'B}$ heralds the successful preparation of the state $\cN^\pT(\rho)$ on system $A'$.

We can calculate that for any input state $\rho_{B'}$, the output unnormalized state on system $A'$ reads
\begin{equation*}
\begin{aligned}
    &\frac{1}{d_A}\tr_{BB'}\Big[(\Phi_{BB'}\ox \idop_{A'}) (\sfN_{AB}\ox \rho_{B'})\Big]\\
    =&\frac{1}{d_Ad_B}\tr_{B}\big[\sfN_{AB}(\rho_{B'}^{\pT}\ox\idop_{A})\big] = \frac{1}{d_Ad_B}\cN_{B'\to A'}^{\pT}(\rho_{B'}).
\end{aligned}
\end{equation*}
Thus, the output state is $\cN_{B'\to A'}^{\pT}(\rho_{B'})$, and the success probability of this protocol is given by $p_{\text{suc}} =\tr\big[\sfN_{AB}(\rho_{B'}^{\pT}\ox\idop_{A})\big]/(d_Ad_B)$. When $\cN_{A\to B}$ is a unital channel, we have that $p_{\text{suc}} =1/d^2$ where $d_A=d_B = d$.

While the transpose admits a probabilistic implementation, the situation changes strictly for the complex conjugate. We show that a universal realization of $\cN^*$ is impossible within the domain of physical supermaps, even with any finite copy supplied or allowing for probabilistic success.

\begin{theorem}[No-go for complex conjugation]\label{thm:channel_conj_CPnogo}
For any finite integer $n \ge 1$, there exists no $n$-slot completely positive supermap $\cC$ such that $\cC(\cN^{\ox n}) = \cN^*,~\forall \cN\in\CPTP(A,B)$.
\end{theorem}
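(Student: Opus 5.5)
The plan is to reduce the statement to a no-go for a CP map on pure states, using a subfamily of channels for which the conjugation acts as a genuinely antilinear operation. I would then show that such a CP map cannot exist by a support argument on its Choi operator. The subfamily is the replacement channels $\cN_\sigma(\rho)=\tr[\rho]\,\sigma$ with $\sigma=\ketbra{\psi}{\psi}$ pure. For these, $\sfN_{A'B}=\idop_{A'}\ox\ketbra{\psi}{\psi}_B$ and $\cN_\sigma^*=\cN_{\sigma^*}$, where $\sigma^*=\ketbra{\psi^*}{\psi^*}$.

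\textbf{Step 1 (reduction).} Suppose an $n$-slot CP supermap $\cC$ with $\cC(\cN^{\ox n})=\cN^*$ exists. Its associated linear map $\cN^{\cC}$ is CP. I would compose it in a fixed way with $n$ replacement channels: feed a fixed state into the input slots $A_i$ and read only the $B_i$ registers and the output. Since the composition of a CP supermap with CP maps is CP, this yields a CP map $\Lambda:\mathscr{L}_{B^{\ox n}}\to\mathscr{L}_{B}$ satisfying
\begin{equation*}
\Lambda\bigl(\ketbra{\psi}{\psi}^{\ox n}\bigr)=c\,\ketbra{\psi^*}{\psi^*}\quad\text{for all pure }\ket{\psi}\in\cH_B,
\end{equation*}
for some constant $c>0$. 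For concreteness, I would take the output of $\cC(\cN_\sigma^{\ox n})$ on a fixed input state $\tau_P$. Because $\cN_\sigma$ discards its input, the joint operator fed into the comb factorizes as $(\text{stuff on }A_i)\ox\sigma^{\ox n}$, so the dependence on $\sigma$ enters linearly through $\sigma^{\ox n}$. Checking that this reduction is legitimate, i.e.\ that $\Lambda$ is well defined and CP given only the link-product description, is routine but needs care.

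\textbf{Step 2 (Choi support).} Let $J\in\mathscr{P}_{B^{\ox n}B}$ be the Choi operator of $\Lambda$, so that $\Lambda(X)=\tr_{\rm in}[J(X^{\pT}\ox\idop)]$. The defining identity gives
\begin{equation*}
\tr\bigl[J(\ketbra{\psi^*}{\psi^*}^{\ox n}\ox(\idop-\ketbra{\psi^*}{\psi^*}))\bigr]=0
\end{equation*}
for all $\psi$. Since $J\ge0$, $J$ annihilates every vector $\ket{\phi}^{\ox n}\ket{\chi}$ with $\phi=\psi^*$ arbitrary and $\chi\perp\phi$. Moreover, $J$ may be assumed supported on $\mathrm{Sym}^n\ox\cH_B$, because only that block is probed.

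\textbf{Step 3 (the key obstacle: showing these vectors span everything).} Let $w\in\mathrm{Sym}^n(\cH_B)\ox\cH_B$ be orthogonal to all such vectors. I plan to show $w=0$, which forces $J=0$ and contradicts $\tr[\Lambda(\psi^{\ox n})]=c>0$. Orthogonality says the vector $v(\phi):=(\bra{\phi}^{\ox n}\ox\idop)w$ is proportional to $\ket{\phi}$ for every $\phi$. The left side is a homogeneous polynomial of degree $n$ in $\bar\phi$, i.e.\ antiholomorphic. The right side equals $\lambda(\phi)\ket{\phi}$ with $\lambda(\phi)=\braket{\phi}{v(\phi)}$, which has degree $1$ in $\phi$ and degree $n+1$ in $\bar\phi$.

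Here I must be careful about which variable is conjugated, since the transposes must be tracked correctly. The intended point is that an antiholomorphic polynomial vector which is everywhere parallel to the holomorphic vector $\phi$ must vanish when $d_B\ge2$. To see this, pick $\phi$ and a direction $\eta\perp\phi$. Consider $\phi_t=\phi+t\eta$ with $t\in\CC$. Then $v(\phi_t)$ depends only on $\bar t$, whereas its component along $\eta$ must equal $t\cdot(\text{component along }\phi)$ up to normalization. Comparing coefficients of $t$ and $\bar t$ then kills all components.

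This holomorphic-versus-antiholomorphic mismatch is exactly where the conjugation appears. It is the step I expect to require the most attention, particularly the bookkeeping of the transpose in the Choi representation, so that the condition on $J$ really pairs $\psi^{*\ox n}$ with $\psi^{*}$ (rather than with $\psi$, which would be consistent and give no contradiction). If this direct argument gets messy, the fallback is to twirl $\Lambda$ into a $U^{\ox n}\to U^*$ covariant map and use Schur–Weyl on $\mathrm{Sym}^n\ox\overline{\cH_B}$. That route bounds the achievable conjugation fidelity by $(n+1)/(n+d_B)<1$, in the spirit of optimal state estimation. Either route yields the contradiction for every finite $n$, proving Theorem~\ref{thm:channel_conj_CPnogo}. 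The no-go for $\cN^\dagger$ then follows by composing with the probabilistic transpose of FIG.~\ref{fig:channel_tele}.
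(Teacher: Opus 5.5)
Your proof is correct. Its first half is the paper's own reduction. The paper also restricts to replacement channels $\cR^\rho(X)=\tr(X)\rho$, uses $(\cR^\rho)^*=\cR^{\rho^{\pT}}$, and concludes that $\cC$ would give a CP map $\rho^{\ox n}\mapsto\rho^{\pT}$. The routes diverge after that. The paper simply cites the known no-go for $n$-copy universal transposition of states (Yang et al.\ 2014, Dong et al.\ 2019). You instead prove a pure-state version from scratch.

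Your transpose bookkeeping is right. From $\tr[\Lambda(X)Y]=\tr[J(X^{\pT}\ox Y)]$, the PSD Choi operator $J$ must annihilate $\ket{\phi}^{\ox n}\ket{\chi}$ for all $\chi\perp\phi$. For an identity-like map the partner condition would instead be $\chi\perp\bar\phi$, which is consistent. Your comparison of $t$ versus $\bar t$ along $\phi+t\eta$ does force $v\equiv 0$. The fact that the vectors $\ket{\phi}^{\ox n}$ span $\mathrm{Sym}^n(\cH_B)$ then gives $w=0$.

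Your route buys three things.
\begin{enumerate}
\item It is self-contained, and it needs only pure inputs.
\item The annihilation conditions never involve $c$, so you actually show $\Lambda(\psi^{\ox n})=0$ for every $\psi$ whenever each output is a nonnegative multiple of $\psi^*$. This rules out heralded realizations even when the success probability depends on $\cN$. Constant-probability realizations are already covered by the theorem as stated, since $\cC/p$ is still CP.
\item You make explicit the hypothesis $d_B\ge 2$, which is genuinely needed. For $d_B=1$ the only channel is the trace, $\cN^*=\cN$, and the identity comb works. The paper's phrase ``for any integer $d,n\ge1$'' glosses over this.
\end{enumerate}
The paper's route is shorter because it delegates the real content to the cited state-level result.

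Two small wording fixes in Step 1. The slots $A_i$ do not receive a ``fixed state''. They are simply traced out, because $\cN_\sigma$ discards its input, so the only fixed state you need is $\tau_P$. Also, in the deterministic setting $c=1$.
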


The proof is by contradiction. It was shown that for any integer $d,n \ge 1$, there exists no $n$-to-$1$ completely positive map $\cN_{A^n\to A'}$ such that $\cN(\rho^{\ox n}) = \rho^{\pT}$ for all $\rho\in\mathscr{D}_A$~\cite{Yang_2014,Dong_2019}. Then, suppose there is a CP supermap that can universally realize the complex conjugate with $n$ copies of the input channel for some $n\geq 1$, i.e., $\cC(\cN^{\ox n}) = \cN^*,~\forall \cN\in\CPTP(A,B)$. Consider a state preparation channel $\cR^{\rho}_{A\to B}$ such that $\cR^{\rho}_{A\to B}(X) = \tr (X) \rho,\forall X$. Note that the complex conjugate of $\cR^{\rho}_{A\to B}$ is also a state preparation channel $\cR^{\rho^{\pT}}_{A\to B}(X) = \tr (X) \rho^{\pT},\forall X$. By assumption, we have that $\cC((\cR^{\rho}_{A\to B})^{\ox n}) = \cR^{\rho^{\pT}}_{A\to B}$. Therefore, we have a protocol that can prepare $\rho^{\pT}$ for any quantum state $\rho$, a contradiction. 

A direct consequence of Theorem~\ref{thm:channel_conj_CPnogo} is that the channel adjoint $\cN^\dagger$ is also physically unrealizable from black-box accesses of $\cN$. If a universal CP supermap for the adjoint existed, one could compose it with the probabilistic protocol for the channel transpose to realize the conjugate map ($\cN^* = (\cN^\pT)^\dagger$) probabilistically, contradicting Theorem~\ref{thm:channel_conj_CPnogo}. Hence, we have seen a fundamental gap between realizing the adjoint map for unknown unitaries and unknown channels.

\begin{figure}
    \centering
    \includegraphics[width=.85\linewidth]{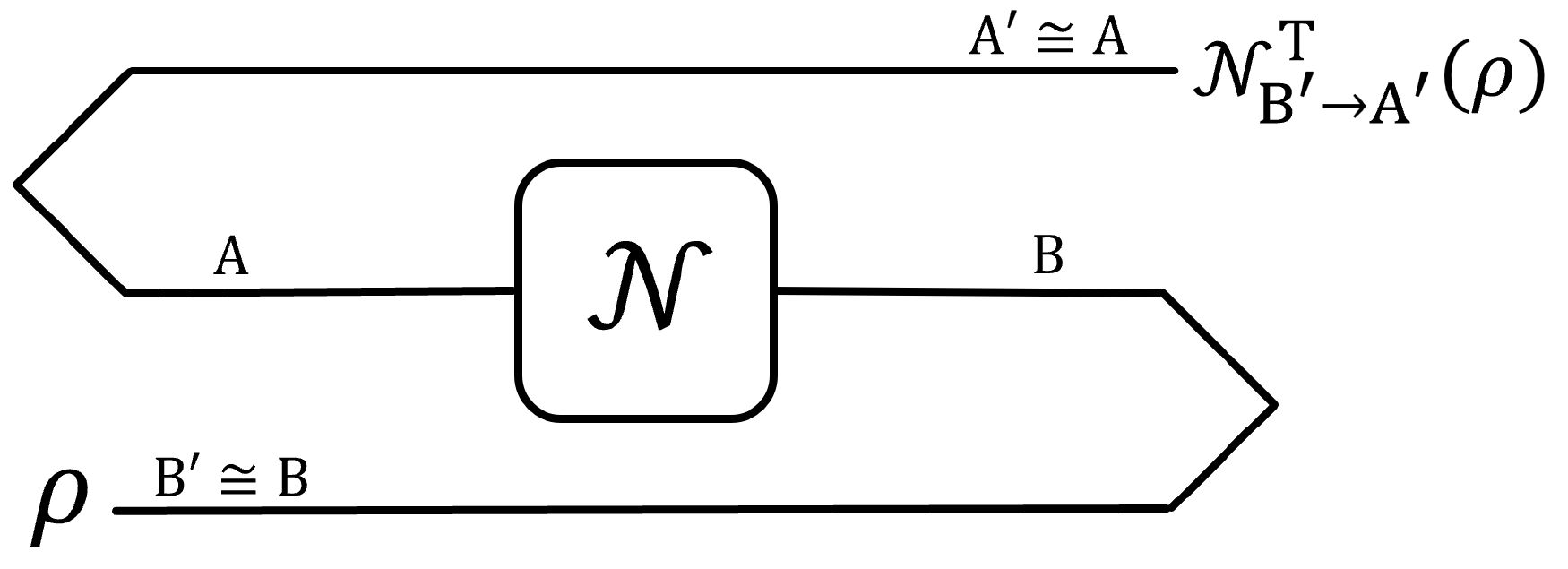}
    \caption{Probabilistic implementation of the transpose of an unknown channel $\cN_{A\to B}$. A maximally entangled state "$\langle$" and an input state $\rho$ are prepared. The black-box channel $\cN_{A\to B}$ is applied and a Bell measurement $\{\Phi_{B'B}, \idop-\Phi_{B'B}\}$ is then performed. The output state on system $A'$ is $\cN_{B\to A}^{\pT}(\rho)$ when the measurement outcome is $\Phi_{B'B}$.}
    \label{fig:channel_tele}
\end{figure}

\subsection{Simulation of Channel Adjoint via Werner-Holevo Channels}

The obstruction of Theorem~\ref{thm:channel_conj_CPnogo} can be traced to the impossibility of universally implementing matrix transposition on arbitrary quantum states. That is, if we can realize the transpose map for unknown Choi states, we can implement the corresponding channel via postselected teleportation~\cite{Lloyd2011}. Nevertheless, there exist canonical CPTP channels that contain the transpose as a component, namely the Werner–Holevo channels~\cite{Werner2002}
\begin{equation}
\cW_d^{\pm}(X) = \frac{1}{d\pm 1} \big[\tr (X) \idop_d \pm X^{\pT}\big].
\end{equation} 
These channels are fully physical yet explicitly mix the desired transpose term $X^\pT$ with an isotropic noise term. By taking a linear combination of $\cW^+_d$ and $\cW^-_d$, one can effectively eliminate the identity term $\tr(X)\idop_d$ and isolate the transpose map. The price of ``cancelling noise'' is that the required combination is generally not a convex mixture, and hence cannot define a CP supermap. To bypass this, we adopt the framework of \textit{virtual combs}~\cite{Zhu_2024}. By relaxing the convexity requirement, we can construct quasi-probabilistic mixtures of pre- and post-processing Werner-Holevo channels that isolate the desired non-physical mappings.

Specifically, an $n$-slot virtual comb admits a quasi-probability decomposition
\begin{equation}
    \widetilde{\cC}=\sum_{i} \alpha_i \cC_i,~~\sum_i \alpha_i = 1,
\end{equation}
where each $\cC_i$ is a $n$-slot quantum comb and $\alpha_i\in\RR$. To quantify how much a given virtual comb deviates from physical combs, we introduce its \textit{base norm} with respect to the set of quantum combs as follows.
\begin{equation*}\label{Eq:basenorm}
\|\widetilde{\cC}\|_{\bdia} \!\coloneqq \!\min_{\alpha_{\pm}\geq 0} \Big\{\alpha_+ + \alpha_-\!: \widetilde{\cC}=\alpha_+ \cC_+ - \alpha_- \cC_-,\cC_{\pm} \in\mathsf{Comb}_n\Big\},
\end{equation*}
where $\mathsf{Comb}_n$ denotes the set of all $n$-slot quantum combs. Similar quantities have already been studied for HP linear maps~\cite{Jiang_2021,regula2021operational}. We show that this base norm is nothing but the diamond (completely bounded trace) norm for combs, defined operationally by (cf.~\cite[Eq.~(19)]{Chiribella_2008m} and~\cite[Definition 1]{gutoski2012measure})
\begin{equation}
    \|\widetilde{\cC}\|_{\dia} \!\coloneqq \max_{\Xi^{(n)}}\left\|\left(\idop\ox \sqrt{\Xi^{(n)}}\right)\widetilde{\sfC}\left(\idop\ox \sqrt{\Xi^{(n)}}\right)\right\|_1,
\end{equation}
where $\widetilde{\sfC}$ is the Choi operator of $\widetilde{\cC}$ and $\Xi^{(n)}$ defines an $n$-slot \textit{quantum tester} $\{\sfT_i\}_i$ for $\widetilde{\cC}$ by $\sfT_i\geq 0$,
\begin{equation}
\begin{aligned}
    &\sum_i \sfT_i = \idop_{F} \ox \Xi^{(n)},\\
    &\tr_{O_k}[\Xi^{(k)}] = \idop_{I_k}\ox \Xi^{(k-1)},~\forall 2\leq k\leq n,\\
    &\tr \Xi^{(1)} = 1.
\end{aligned}
\end{equation}
The diamond norm admits an operational interpretation through comb discrimination~\cite{Chiribella_2008m,gutoski2012measure}, and generalizes the diamond norm of an HP linear map $\cN$, i.e., $\|\cN\|_{\dia} = \max_{\rho}\left\|\left(\idop\ox \sqrt{\rho}\right)\sfN\left(\idop\ox \sqrt{\rho}\right)\right\|_1$.
\begin{lemma}\label{lemma:basenorm=diamondnorm}
    For any $n$-slot virtual comb $\widetilde{\cC}$, it holds that $\|\widetilde{\cC}\|_{\bdia} = \|\widetilde{\cC}\|_{\dia}$.
\end{lemma}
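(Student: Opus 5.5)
We prove the two inequalities separately, working at the level of Choi operators. An $n$-slot quantum comb is characterized by a positive Choi operator $\sfC\geq 0$ satisfying the causal normalization constraints. These are dual to the tester constraints: $\tr_{F}\sfC = \idop_{O_n}\ox \sfC^{(n-1)}$, and so on down to $\tr_{I_1}\sfC^{(1)}=1$, with $P$ absorbed into $I_1$. The key pairing fact is that for any comb $\sfC$ and any tester operator $\Xi^{(n)}$ we have $\tr[\sfC(\idop_F\ox\Xi^{(n)})]=1$.

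\textbf{Step 1: $\|\widetilde{\cC}\|_{\dia}\leq\|\widetilde{\cC}\|_{\bdia}$.}
Take an optimal decomposition $\widetilde{\sfC}=\alpha_+\sfC_+-\alpha_-\sfC_-$. Fix any $\Xi^{(n)}$ and write $K=\idop\ox\sqrt{\Xi^{(n)}}$. By the triangle inequality and positivity,
\[
\|K\widetilde{\sfC}K\|_1\leq \alpha_+\tr[\sfC_+(\idop\ox\Xi^{(n)})]+\alpha_-\tr[\sfC_-(\idop\ox\Xi^{(n)})]=\alpha_++\alpha_- .
\]
Maximizing over $\Xi^{(n)}$ gives the first inequality.

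\textbf{Step 2: $\|\widetilde{\cC}\|_{\bdia}\leq\|\widetilde{\cC}\|_{\dia}$.}
Here we use conic duality.
\begin{enumerate}
\item Write the base norm as a conic program. Minimize $\alpha_++\alpha_-$ over $P_\pm=\alpha_\pm\sfC_\pm\geq0$ subject to $\widetilde{\sfC}=P_+-P_-$, with each $P_\pm$ lying in the cone generated by combs. That cone is described by the linear causal constraints with free scalars $\alpha_\pm$.
\item Take the Lagrangian dual. By the standard duality between the comb cone and the tester cone, the dual program is
\[
\max_{Y,\Xi}\ \tr[Y\widetilde{\sfC}]\quad\text{s.t.}\quad -\idop\ox\Xi^{(n)}\leq Y\leq \idop\ox\Xi^{(n)},
\]
with $\Xi^{(n)}$ a normalized tester operator.
\item Establish strong duality via Slater's condition. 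A strictly feasible dual point is $\Xi^{(n)}$ proportional to the identity with $Y=0$. On the primal side, $\widetilde{\sfC}$ is an affine combination of combs, so the primal is feasible.
\item Evaluate the dual for fixed $\Xi^{(n)}$. Assume first that $\Xi^{(n)}>0$, and substitute $Y=KZK$ with $-\idop\leq Z\leq\idop$. Then
\[
\max_Z \tr[ZK\widetilde{\sfC}K]=\|K\widetilde{\sfC}K\|_1,
\]
since $K\widetilde{\sfC}K$ is Hermitian. A singular $\Xi^{(n)}$ is handled by continuity, or by restricting $Y$ to the support of $\idop\ox\Xi^{(n)}$. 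Hence the dual optimum equals $\|\widetilde{\cC}\|_{\dia}$.
\end{enumerate}

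\textbf{Main obstacle.}
The hard step is step 2 of Step 2: deriving the dual cone correctly. We must show that the dual variables arising from the causal equality constraints reassemble exactly into the recursive tester normalization $\tr_{O_k}\Xi^{(k)}=\idop_{I_k}\ox\Xi^{(k-1)}$. The approach is to introduce Lagrange multipliers for each causal constraint and use the known characterization of the dual affine set of combs (Chiribella et al.; Gutoski). Concretely, $\tr[\sfC\,\sfT]\geq0$ holds for all combs exactly when $\sfT$ dominates $\idop\ox\Xi^{(n)}$ in the appropriate sense. An inductive argument on $n$, peeling off one slot at a time, should make this bookkeeping rigorous.
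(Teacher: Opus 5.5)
Your proof is correct. At its core it is the same SDP-duality argument as the paper's, organized from the opposite side.

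\textbf{The paper's route.} It takes the primal/dual SDP pair for the comb diamond norm as known, including strong duality (Chiribella et al., Gutoski). It then observes that the dual program, namely minimize $p$ subject to $-\sfS\le\widetilde{\sfC}\le\sfS$ with $\sfS$ an unnormalized comb of normalization $p$, is literally the base-norm program. The change of variables is $\sfC_\pm=(\sfS\pm\widetilde{\sfC})/2$, $\alpha_\pm=(p\pm1)/2$, with inverse $\sfS=\sfC_++\sfC_-$, $p=\alpha_++\alpha_-$.

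\textbf{Your route.} You obtain $\|\widetilde{\cC}\|_{\dia}\le\|\widetilde{\cC}\|_{\bdia}$ with no duality at all, from the triangle inequality and the pairing $\tr[\sfC(\idop_F\ox\Xi^{(n)})]=1$. For the reverse inequality you dualize the base-norm program yourself and evaluate the inner maximum over $Y$ as a trace norm. This also makes explicit why the SDP value equals the operational formula $\max_{\Xi^{(n)}}\|(\idop\ox\sqrt{\Xi^{(n)}})\widetilde{\sfC}(\idop\ox\sqrt{\Xi^{(n)}})\|_1$, an identification the paper imports from the literature.

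\textbf{The cost, and a shortcut.} Your ``main obstacle'' is exactly the comb/tester duality that the paper simply cites. You can bypass it the same way: after substituting $\sfS=P_++P_-$, your primal is precisely Gutoski's dual program.

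\textbf{If you derive the duality yourself.} Your informal statement is not the one you need; for instance, $\sfT=0$ trivially gives $\tr[\sfC\,\sfT]\ge0$. The relevant fact is: for $W\ge0$, $\tr[W\sfC]\le1$ for every comb $\sfC$ if and only if $W\le\idop_F\ox\Xi^{(n)}$ for some normalized tester operator $\Xi^{(n)}$. Equivalently, in a Lagrangian with explicit multipliers for the causal constraints, stationarity must force $W_1+W_2=\idop_F\ox\Xi^{(n)}$ with $\Xi^{(n)}$ obeying the tester recursion and normalization.
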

The proof is detailed in Appendix~\ref{app:proof}. Now, we present a virtual realization of the universal complex conjugate of quantum channels, utilizing the Werner-Holevo channels as pre- and post-processing channels.

\begin{theorem}[Virtual complex conjugation]\label{thm:conj_HPTP}
There exists a 1-slot virtual comb $\widetilde{\cC}$ that universally implements the complex conjugate, i.e., $\widetilde{\cC}(\cN) = \cN^*,\forall \cN \in \CPTP(A,B)$. 
\end{theorem}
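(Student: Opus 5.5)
We construct $\widetilde{\cC}$ explicitly as a one-slot comb that applies a Hermitian-preserving map $\cT$ before the channel and a Hermitian-preserving map $\cS$ after it, i.e. $\widetilde{\cC}(\cN)=\cS\circ\cN\circ\cT$. The maps $\cT$ and $\cS$ are affine combinations of Werner--Holevo channels and the identity. The guiding identity is $\cN^*(X)=\big(\cN(X^{\pT})\big)^{\pT}$. We check it on a Kraus decomposition: $(K_jX^{\pT}K_j^\dag)^{\pT}=K_j^*XK_j^{\pT}$. So it suffices to realize the transpose map $\theta_d(X)=X^{\pT}$ on $A$ and on $B$, each as a quasi-probabilistic mixture of channels.

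From the definition of $\cW_d^{\pm}$,
\begin{equation*}
(d+1)\cW_d^+-(d-1)\cW_d^-=2\,\theta_d ,
\end{equation*}
because the $\tr(X)\idop_d$ terms cancel. This gives
\begin{equation*}
\theta_d=\tfrac{d+1}{2}\cW_d^+-\tfrac{d-1}{2}\cW_d^-.
\end{equation*}
The coefficients sum to $1$, so $\theta_d$ is an affine combination of two CPTP maps. Set $\cT=\theta_{d_A}$ and $\cS=\theta_{d_B}$ in this form, and expand the product:
\begin{equation*}
\widetilde{\cC}=\sum_{s,t\in\{+,-\}}\alpha_s^{(d_B)}\alpha_t^{(d_A)}\,\cC_{s,t},\qquad \cC_{s,t}(\cN)=\cW_{d_B}^{s}\circ\cN\circ\cW_{d_A}^{t}.
\end{equation*}
Each $\cC_{s,t}$ consists of pre- and post-processing by channels with no memory, so it is a genuine one-slot quantum comb. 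The coefficients $\alpha$ are real and their total sum is $1\cdot1=1$. Hence $\widetilde{\cC}$ is a virtual comb in the sense defined above.

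By linearity, $\widetilde{\cC}(\cN)=\theta_{d_B}\circ\cN\circ\theta_{d_A}=\cN^*$ for every $\cN\in\CPTP(A,B)$; in fact this holds for every linear map $\cN$. As a sanity check at the Choi level, conjugating by transposes on both sides sends $\sfN_{A'B}$ to $\sfN_{A'B}^{\pT}$, which matches the Choi operator of $\cN^*$ stated in the notation section.

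The only real point to establish is that the quasi-probability decomposition exists with real weights summing to one; the Werner--Holevo cancellation settles this. The main subtlety is basis consistency: $\theta$ must be taken in the same predetermined basis used to define $\cN^*$, and the $\cW^\pm$ are defined in that basis. One could also record the sampling cost. The decomposition gives $\|\widetilde{\cC}\|_{\bdia}\le d_Ad_B$, since each factor has $\tfrac{d+1}{2}+\tfrac{d-1}{2}=d$. Lemma~\ref{lemma:basenorm=diamondnorm} converts this bound into a diamond-norm bound, but the bound is not needed for existence.
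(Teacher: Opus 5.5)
Your proof is correct, but it uses a different decomposition from the paper's. You take the full product $\theta_{d_B}\circ\cN\circ\theta_{d_A}$ with $\theta_d=\frac{d+1}{2}\cW_d^+-\frac{d-1}{2}\cW_d^-$. This expands into all four pre/post combinations and has Choi operator $F_{A'A}\ox F_{BB'}$.

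The paper instead uses only three combinations: pre-processing $\cW^{x}_{d_A}$ and post-processing $\cW^{y}_{d_B}$ with $(x,y)\in\{(+,+),(-,-),(+,-)\}$, weighted by $\frac{d_B+1}{2}$, $\frac{(d_A-1)(d_B-1)}{2}$ and $-\frac{d_A(d_B-1)}{2}$. Its Choi operator is $F_{A'A}\ox F_{BB'}+\frac{1}{d_A+1}\big(\idop_{A'A}-d_AF_{A'A}\big)\ox\idop_{BB'}$. The extra term is killed by every trace-preserving input: $\tr_B\sfN_{AB}^{\pT}=\idop_A$ and $\tr_A[\idop_{A'A}-d_AF_{A'A}]=0$. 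So the two virtual combs agree on $\CPTP(A,B)$ but not on arbitrary linear maps.

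Your route buys transparency and a stronger statement. The identity $\widetilde{\cC}(\cN)=\cN^*$ holds for every linear $\cN$, and it follows from a one-line Kraus computation.

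The paper's route buys a smaller sampling overhead. It exploits trace preservation to drop the $(-,+)$ term, so the total quasi-probability weight is $d_Ad_B-d_A+1$ rather than your $d_Ad_B$. Theorem~\ref{thm:min_basenorm} shows $d_Ad_B-d_A+1$ is optimal for combs that need only work on channels. This three-term decomposition is also what fixes the sampling probabilities $p_1,p_2,p_3$ and the normalization $\gamma$ used in the later estimation protocols. Asymptotically, both give the same $O(d_A^2d_B^2)$ sample complexity.
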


\begin{proof}
The proof is constructive. 
Let us denote by
\begin{equation}
P_{A'A}^s = \frac{1}{2}(\idop_{AA'} + F_{AA'}), \quad P_{A'A}^a = \frac{1}{2}(\idop_{A'A} -  F_{AA'})
\end{equation}
the projection onto the symmetric and antisymmetric subspaces on $A'A$, respectively. The Choi operators of the Werner-Holevo channels $\cW_{d_A}^{+}$ and $\cW_{d_A}^{-}$ are given by
\begin{equation}
\sfW_{d_A}^{+} = \frac{2}{d+1} P^s_{A'A},\quad \sfW_{d_A}^{-} = \frac{2}{d-1} P^a_{A'A},
\end{equation}
respectively. Now, we will construct a desired virtual comb $\widetilde{\cC}_{(A\to B)\to (A'\to B')}$ to realize the complex conjugation of any unknown channel. Consider the following three operators 
\begin{equation}\label{Eq:Choi_C123}
\begin{aligned}
&\sfC^{(1)}_{A'ABB'}\coloneqq \sfW_{d_A}^{+} \ox \sfW_{d_B}^{+},\\
&\sfC^{(2)}_{A'ABB'}\coloneqq \sfW_{d_A}^{-} \ox \sfW_{d_B}^{-},\\
&\sfC^{(3)}_{A'ABB'}\coloneqq \sfW_{d_A}^{+} \ox \sfW_{d_B}^{-},
\end{aligned}
\end{equation}
and a supermap $\widetilde{\cC}_{(A\to B)\to (A'\to B')}$ with a Choi operator
\begin{equation}\label{Eq:virtualC}
\begin{aligned}
\widetilde{\sfC}_{A'ABB'} &\coloneqq \frac{d_B+1}{2}\sfC^{(1)}_{A'ABB'} + \frac{(d_A-1)(d_B-1)}{2}\sfC^{(2)}_{A'ABB'}\\
&\qquad - \frac{d_A(d_B-1)}{2}\sfC^{(3)}_{A'ABB'}\\
&= \frac{2}{d_A+1} P^s_{A'A}\ox P^s_{BB'} + 2 P^a_{A'A}\ox P^a_{BB'}\\
&\qquad - \frac{2d_B}{d_A+1} P^s_{A'A}\ox P^a_{BB'} \\
&= F_{A'A} \ox F_{BB'} + \frac{\idop_{A'ABB'}}{d_A+1} - \frac{d_A(F_{A'A} \ox \idop_{BB'})}{d_A+1}.
\end{aligned}
\end{equation}
This constitutes a valid virtual comb, as verified by
\begin{equation*}
\begin{aligned}
\tr_{B'}\widetilde{\sfC}_{A'ABB'} &= \left(1- \frac{d_B^2}{d_A+1}\right)F_{A'A} \ox \idop_{B} + \frac{d_B}{d_A+1}\idop\\
&= \tr_{BB'}\widetilde{\sfC}_{A'ABB'} \ox \frac{\idop_{B}}{d_B},\\
\tr_{AB'}\widetilde{\sfC}_{A'ABB'} &= \frac{d_A+1+d_Ad_{B'}-d_Ad_B}{d_A+1}\idop_{A'B}  = \idop_{A'B}. 
\end{aligned}
\end{equation*}
Applying $\widetilde{\cC}_{(A\to B)\to (A'\to B')}$ to any quantum channel $\cN_{A\to B}$, the Choi operator of the output map can be calculated as
\begin{equation}
\begin{aligned}
&\tr_{AB}\left[\widetilde{\sfC}_{A'ABB'}\left(\sfN_{AB}^{\pT}\ox \idop_{A'B'}\right)\right]\\
=&\; \sfN_{AB}^{\pT} + \frac{d_A}{d_A+1} \idop_{A'B'} - \frac{d_A}{d_A+1}\idop_{A'B'} = \sfN_{AB}^{\pT}.
\end{aligned}
\end{equation}
\end{proof}

\begin{figure}[t]
    \centering
    \includegraphics[width=1\linewidth]{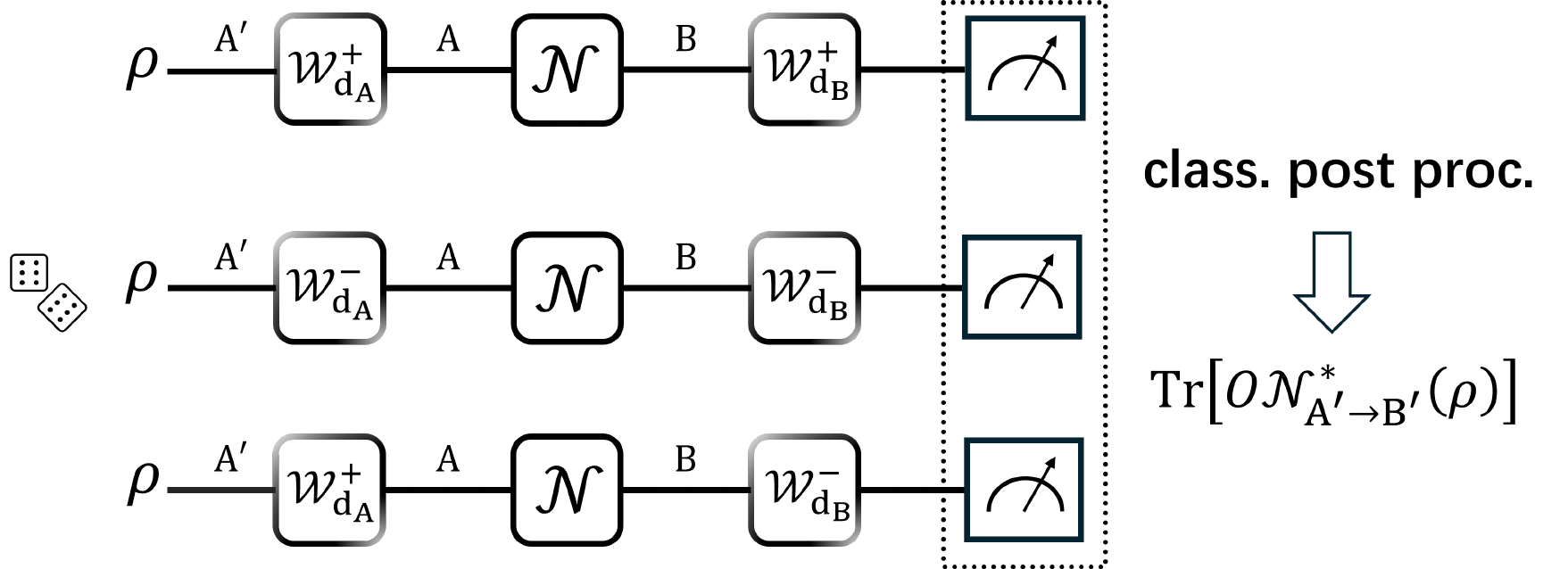}
    \caption{Quasi-probabilistic simulation of the complex conjugate $\cN^*$ for an unknown channel $\cN$. The protocol brackets the channel with pre- and post-processing Werner-Holevo channels $\cW^{\pm}$, selected according to a specific quasiprobability distribution. The target expectation value $\tr[O\cN^*(\rho)]$ is reconstructed through classical post-processing of the measurement outcomes from multiple sampling rounds.}
    \label{fig:conjugate}
\end{figure}

The virtual comb constructed in Theorem~\ref{thm:conj_HPTP} translates directly into an operational sampling protocol. Since the virtual comb is a linear combination of three physical processes, with pre- and post-processing channels being 
\begin{equation*}
\{\cW_{d_A}^{+}(\cdot),\cW_{d_B}^{+}(\cdot)\},\{\cW_{d_A}^{-}(\cdot),\cW_{d_B}^{-}(\cdot)\},\{\cW_{d_A}^{+}(\cdot),\cW_{d_B}^{-}(\cdot)\},
\end{equation*}
expectation values of the conjugate map $\tr[O\cN^*(\rho)]$ can be estimated via a Monte Carlo procedure, as illustrated in FIG.~\ref{fig:conjugate} to estimate $\tr[O \cN_{A\to B}^*(\rho)]$ for any observable $O$ and state $\rho$. For $O$ diagonal in computational basis (general $O$ can be reduced to a diagonal one by inserting a unitary before measurement), i.e., $O=\sum_{x\in\{0,1\}^n} A(x)\ketbra{x}{x},A(x)\in[-1,1]$, the detailed protocol is as follows. Let $\gamma=d_Ad_B-d_A+1$ and
\begin{equation}\label{eq:sample_probs}
    p_1 = \frac{d_B+1}{2\gamma},p_2 = \frac{(d_A-1)(d_B-1)}{2\gamma},p_3 = \frac{d_A(d_B-1)}{2\gamma}.
\end{equation}
We sample $(x,y)$ from $\{(+,+),(-,-),(+,-)\}$ with probability $p_1,p_2,p_3$, respectively. 
In the $m$-th out of $M$ sampling rounds, apply $\cW_{d_B}^{y} \circ \cN_{A\to B} \circ \cW_{d_A}^{x}(\cdot)$ on the input state, measure the output state in the computational basis and denote the measurement result as $s^{(m)}$. The $m$-th sample results in a random variable 
\begin{equation}
X^{(m)}  = 
\begin{cases}
\gamma A(s^{(m)}) , &\text{if } x= y, \\
-\gamma A(s^{(m)}), &\text{if } x\neq y,
\end{cases}
\end{equation}
where $A(s^{(m)})$ is the measurement value. The empirical mean of $X^{(m)}$ is given by $\xi = 1/M\sum_{m=1}^M X^{(m)}$, which is an unbiased estimator of $\tr\left[O\cN_{A\to B}^*(\rho) \right]$ (cf.~\cite[Lemma 16]{Jiang_2021}). By Hoeffding's inequality, the number of samples $M$ required to achieve precision $\varepsilon$ with probability $1-\delta$ scales as $M = O(d_A^2 d_B^2 \log(1/\delta)/\varepsilon^2)$, where the scaling factor $O(d_A^2 d_B^2)$ is related to the base norm for a virtual comb that bounds the difference between measurement outcomes.

From above, we can see that the variance and thus the sample complexity of the resulting estimator is controlled by the base norm of the virtual comb. We further show that the construction in Eq.~\eqref{Eq:virtualC} is the optimal realization with respect to the base norm. 
\begin{theorem}\label{thm:min_basenorm}
The minimum base norm of a 1-slot virtual comb that satisfies $\widetilde{\cC}(\cN)=\cN^*,~\forall \cN\in\CPTP(A,B)$ is $d_Ad_B-d_A+1$.
\end{theorem}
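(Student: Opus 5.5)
The upper bound comes from the construction in Theorem~\ref{thm:conj_HPTP}; the lower bound from a symmetry reduction followed by an SDP duality argument, using Lemma~\ref{lemma:basenorm=diamondnorm} to move between the base norm and the diamond norm.

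\emph{Upper bound.} The Choi operator in Eq.~\eqref{Eq:virtualC} is written as $\alpha_+\cC_+-\alpha_-\cC_-$. Here $\cC_+$ is the normalized convex mixture of the combs $\sfC^{(1)},\sfC^{(2)}$, with $\alpha_+=\frac{d_B+1}{2}+\frac{(d_A-1)(d_B-1)}{2}$. Likewise $\cC_-=\sfC^{(3)}$, with $\alpha_-=\frac{d_A(d_B-1)}{2}$. Each $\sfC^{(i)}$ is a product of Werner--Holevo channels applied before and after the slot, so it is a genuine $1$-slot comb. Adding the coefficients gives $\alpha_++\alpha_-=d_Ad_B-d_A+1$. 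Hence $\|\widetilde\cC\|_{\bdia}\le\gamma$.

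\emph{Symmetry reduction.} Complex conjugation is covariant: $(\cU\circ\cN\circ\cV)^*=\cU^*\circ\cN^*\circ\cV^*$ for unitary channels $\cU,\cV$. Therefore, if $\widetilde\cC$ implements conjugation, so does its conjugation by the comb that applies $V_A,V_A^*$ on the input side and $U_B,U_B^*$ on the output side. This conjugated comb has the same base norm, because it maps $\mathsf{Comb}_1$ to itself. The base norm is convex, so averaging over the Haar measure on $U(d_A)\times U(d_B)$ does not increase it. The feasible set is affine, so the averaged comb is still feasible. Because of the transpose in the Choi convention, the averaged Choi operator commutes with $\bar V\ox V$ on $A'A$ and with $U\ox\bar U$ on $BB'$ (or the appropriate partially transposed version). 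It therefore lies in $\mathrm{span}\{\idop,F_{A'A}\}\ox\mathrm{span}\{\idop,F_{BB'}\}$. Equivalently, it is $\sum_{x,y\in\{s,a\}} c_{xy}\,P^x_{A'A}\ox P^y_{BB'}$ with four real parameters.

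I then impose two sets of linear conditions on $(c_{ss},c_{sa},c_{as},c_{aa})$:
\begin{itemize}
\item the comb normalization conditions (the partial-trace identities checked after Eq.~\eqref{Eq:virtualC});
\item the requirement $\tr_{AB}[\widetilde\sfC(\sfN^\pT\ox\idop)]=\sfN^\pT$ for every channel $\cN$.
\end{itemize}
I expect these to leave at most one or two free parameters.

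\emph{Lower bound, the main obstacle.} The base norm equals the SDP value $\min\{\tr\Xi_+ +\tr\Xi_-\}$ over decompositions $\widetilde\sfC=\sfC_+-\sfC_-$ into positive, cone-scaled combs. I will take its dual, or equivalently use Lemma~\ref{lemma:basenorm=diamondnorm}: any tester $\Xi$ on $A'B$ gives the lower bound $\|(\idop\ox\sqrt\Xi)\widetilde\sfC(\idop\ox\sqrt\Xi)\|_1$.

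Choosing $\Xi=\idop/(d_Ad_B)$ is not enough. For the explicit construction it gives
\[
\tfrac{d_B+1}{2}+\tfrac{(d_A-1)(d_B-1)}{2}+\tfrac{d_B(d_B-1)}{2},
\]
which matches $\gamma$ only when $d_A=d_B$. So the certificate has to be a genuine dual solution: a Hermitian witness $Y$ in the same symmetric span, with $-\idop\ox\Xi\le Y$-type constraints suitably weighted across the two comb cones. Its value on every feasible $\widetilde\sfC$ should be the affine-invariant quantity $\gamma$.

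I will guess $Y=\beta_{ss}P^sP^s+\dots$ with signs matching the signs of the $c_{xy}$ in Eq.~\eqref{Eq:virtualC}. The coefficients are then solved from complementary slackness with the primal decomposition found above. Finally I will check the dual positivity constraints eigenspace by eigenspace, which reduces to scalar inequalities in $d_A,d_B$. Matching primal and dual values then gives exactly $d_Ad_B-d_A+1$.
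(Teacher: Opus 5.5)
\textbf{Gap: the lower bound is not proved.} Your upper bound is correct, and so is the twirling reduction, with one slip. Invariance under $\bar V\ox V$ would place the Choi operator in $\mathrm{span}\{\idop,\Phi_{A'A}\}$, not $\mathrm{span}\{\idop,F_{A'A}\}$. Because the link product uses $\sfN^{\pT}$, the induced action is actually $V\ox V$ on $A'A$ and $\bar U^\dagger\ox\bar U^\dagger$ on $BB'$, so your conclusion $\mathrm{span}\{\idop,F_{A'A}\}\ox\mathrm{span}\{\idop,F_{BB'}\}$ is still right. The real problem is the lower bound, which is the whole content of the theorem beyond Theorem~\ref{thm:conj_HPTP}. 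You never determine the reduced feasible set (``I expect one or two free parameters''). The dual witness is announced but never written down or checked. So nothing in the proposal establishes $\|\widetilde\cC\|_{\bdia}\ge d_Ad_B-d_A+1$.

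Your reason for rejecting the uniform tester is also a miscalculation, inherited from a typo in the middle line of \eqref{Eq:virtualC}. The coefficient of $P^s_{A'A}\ox P^a_{BB'}$ is $-\frac{2d_A}{d_A+1}$, not $-\frac{2d_B}{d_A+1}$; both $-\frac{d_A(d_B-1)}{2}\sfC^{(3)}$ and the last line of \eqref{Eq:virtualC} give this. With the correct coefficient, the third term of your tester value is $\frac{d_A(d_B-1)}{2}$. Then $\Xi=\idop/(d_Ad_B)$ returns exactly $d_Ad_B-d_A+1$ on the explicit construction.

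\textbf{How your route closes.} Take $d_A,d_B\ge 2$; the degenerate cases are trivial.
\begin{itemize}
\item Write the twirled operator as $a\,\idop+b\,F_{A'A}\ox\idop_{BB'}+c\,\idop_{A'A}\ox F_{BB'}+e\,F_{A'A}\ox F_{BB'}$. The swap trick gives the output $(d_Aa+b)\idop_{A'B'}+c\,\idop_{A'}\ox\cN(\idop)^{\pT}+e\,\sfN^{\pT}$.
\item Universality then forces $e=1$, $c=0$, $b=-d_Aa$, after which both comb conditions hold automatically. The feasible set is the line $F_{A'A}\ox F_{BB'}+a(\idop-d_AF_{A'A}\ox\idop_{BB'})$.
\item On the projectors $P^x_{A'A}\ox P^y_{BB'}$ the coefficients are $c_{ss}=1-(d_A-1)a$, $c_{sa}=-1-(d_A-1)a$, $c_{as}=-1+(d_A+1)a$, $c_{aa}=1+(d_A+1)a$.
\item Twirling turns any decomposition $\alpha_+\sfC_+-\alpha_-\sfC_-$ of a symmetric operator into a symmetric one with the same $\alpha_\pm$.
\item Every nonnegative combination $\sum_{xy}t_{xy}P^x\ox P^y$ is a comb scaled by $\sum_{xy}t_{xy}w_{xy}$, where $w_{xy}=(d_A+s_x)(d_B+s_y)/4$, $s_s=+1$, $s_a=-1$.
\item Hence the base norm on the line is $g(a)=\sum_{xy}|c_{xy}(a)|\,w_{xy}$. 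This equals $\|\widetilde\sfC_a\|_1/(d_Ad_B)$, the uniform-tester value, so no nontrivial witness is needed.
\item $g$ is convex and piecewise linear. At $a=1/(d_A+1)$ its left slope is $-(d_A^2-1)$ and its right slope is $(d_A^2-1)(d_B-1)/2$, so its minimum is $g(1/(d_A+1))=d_Ad_B-d_A+1$.
\end{itemize}

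Carried out this way, your argument would be more elementary than the paper's. The paper uses no symmetry: it rewrites universality as linear constraints on $\sfD_{AB}$ and on a Gell-Mann basis $\{\Delta_j\}$ of $\ker\tr_B$. It then exhibits an explicit dual-feasible point ($X=Z^{(i)}=\idop/(d_Ad_B)$, $Y^{(i)}=\idop$, $G_j\propto\Delta_j$) attaining $d_Ad_B-d_A+1$. As submitted, your proposal proves only the upper bound.
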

The proof is detailed in Appendix~\ref{app:proof}. Naturally, Theorem~\ref{thm:conj_HPTP} provides a path to universally implement the adjoint map virtually. Since $\cN^\dag = (\cN^*)^{\pT}$, we can combine the virtual protocol for the complex conjugate with the probabilistic protocol for the transpose.

\subsection{Petz map of unknown quantum channel}
A central motivation for universally realizing the adjoint map is the Petz recovery map~\cite{Petz1986,PETZ1988}, which is a canonical reversal operation associated with a channel $\cN_{A\to B}$ and a reference state $\sigma_A$, given by \[\cP_{\sigma, \cN}(\cdot) \coloneqq \sigma^{1/2}\cN^\dag\Big(\cN(\sigma)^{-1/2}(\cdot)\cN(\sigma)^{-1/2}\Big)\sigma^{1/2}.\]
We can decompose $\cP_{\sigma,\cN}$ into the following CP maps:
\begin{itemize}
    \item[i).] $\cN(\sigma)^{-1/2}(\cdot)\cN(\sigma)^{-1/2}$, 
    \item[ii).] $\cN^\dag(\cdot)$,
    \item[iii).] $\sigma^{1/2}(\cdot)\sigma^{1/2}$.
\end{itemize}
Overall, $\cP_{\sigma,\cN}$ is a CPTP map when $\cN(\sigma)$ is full-rank. It provides a near-optimal reversal of a quantum channel, serving as a quantum analog of Bayes' rule~\cite{Bai_2025, liu2025unifyingquantumsmoothingtheories}, which makes it a fundamental tool for quantum information recovery when perfect restoration is impossible~\cite{Wilde_2015,Junge_2018}.

To physically realize the Petz map, Ref.~\cite{gilyen2022quantum} proposed an algorithmic approach to realize the three steps above separately, which is built on the block-encoding technique~\cite{low2019hamiltonian} and QSVT~\cite{gilyen2019quantum,gilyen2019thesis}. Such an implementation requires access to copies of the Stinespring isometry $V$ (and $V^\dag$) of the quantum channel $\cN$, which could be challenging and a strong assumption in practice. Crucially, by querying only the \textit{unknown} black-box channel, we develop a protocol to estimate expectation values $\tr[O_A \cP_{\sigma,\cN}(\omega_B)]$, for arbitrary input $\omega_{B}$ and observable $O_A$. As shown in FIG.~\ref{fig:petzmap}, we outline the protocol for estimating $\tr\left[O_A\cP _{\sigma,\cN}(\omega_B)\right]$ as follows. In each of the $M = O(d_A^2 d_B^2 \log(1/\delta)/\varepsilon^2)$ sampling rounds:
\begin{itemize} 
    \item[i).] Sample $(x,y)\in\{(+,+),(-,-),(+,-)\}$ with the probabilities $p_1,p_2,p_3$ in Eq.~\eqref{eq:sample_probs} respectively, and prepare the input state $\ketbra{0}{0}_{R'}\ox \Phi_{A'A''} \ox \omega_{B''} \ox \ketbra{0}{0}_{R}$.
    \item[ii).] Apply $\cW_{d_B}^{y} \circ \cN_{A\to B} \circ \cW_{d_A}^{x}(\cdot)$ on system $A'$, the block-encoding $U_{RB''}^{\cN (\sigma)^{-1/2}}$ on system $RB''$, and the block-encoding $U_{R'A''}^{\sigma^{1/2}}$ on system $R'A''$.
    \item[iii).] Perform a projective measurement $\big(\Phi_{B''B'}, \idop-\Phi_{B''B'}\big)$ on systems $B''B'$ and measure in computational basis on system $R'$ and $R$. Take measurement on system $A''$ with post-processing procedure by post-selecting the outcome $\ketbra{0}{0}_{R'} \ox \Phi_{B''B'}\ox \ketbra{0}{0}_{R}$.
\end{itemize}
After applying the same post-processing procedure as in Theorem~\ref{thm:conj_HPTP}  on $A''$ to obtain the sample results $\{\widetilde{X}^{(m)}\}_{m=1}^M$, we arrive at an unbiased estimator $\widetilde{\xi} = \frac{1}{M}\sum_{m=1}^{M}\widetilde{X}^{(m)}$ for $\tr\left[O_A\cP _{\sigma,\cN}(\omega_B)\right]$, which achieves precision $\varepsilon$ with probability at least $1-\delta$. Given that the focus of many quantum information processing tasks is inherently classical statistics~\cite{Aaronson2018,Huang2020,Yuan2024,Zhang2024}, this estimation capability effectively captures the observable properties of the recovered state, with an intimate connection to quantum error mitigation~\cite{Temme2017,Cai2023}. We emphasize that our method generalizes to the estimation of any linear functional of $\cP_{\sigma,\cN}(\omega_B)$, though, without physically preparing the state.

\begin{figure}[t]
    \centering
    \includegraphics[width=1\linewidth]{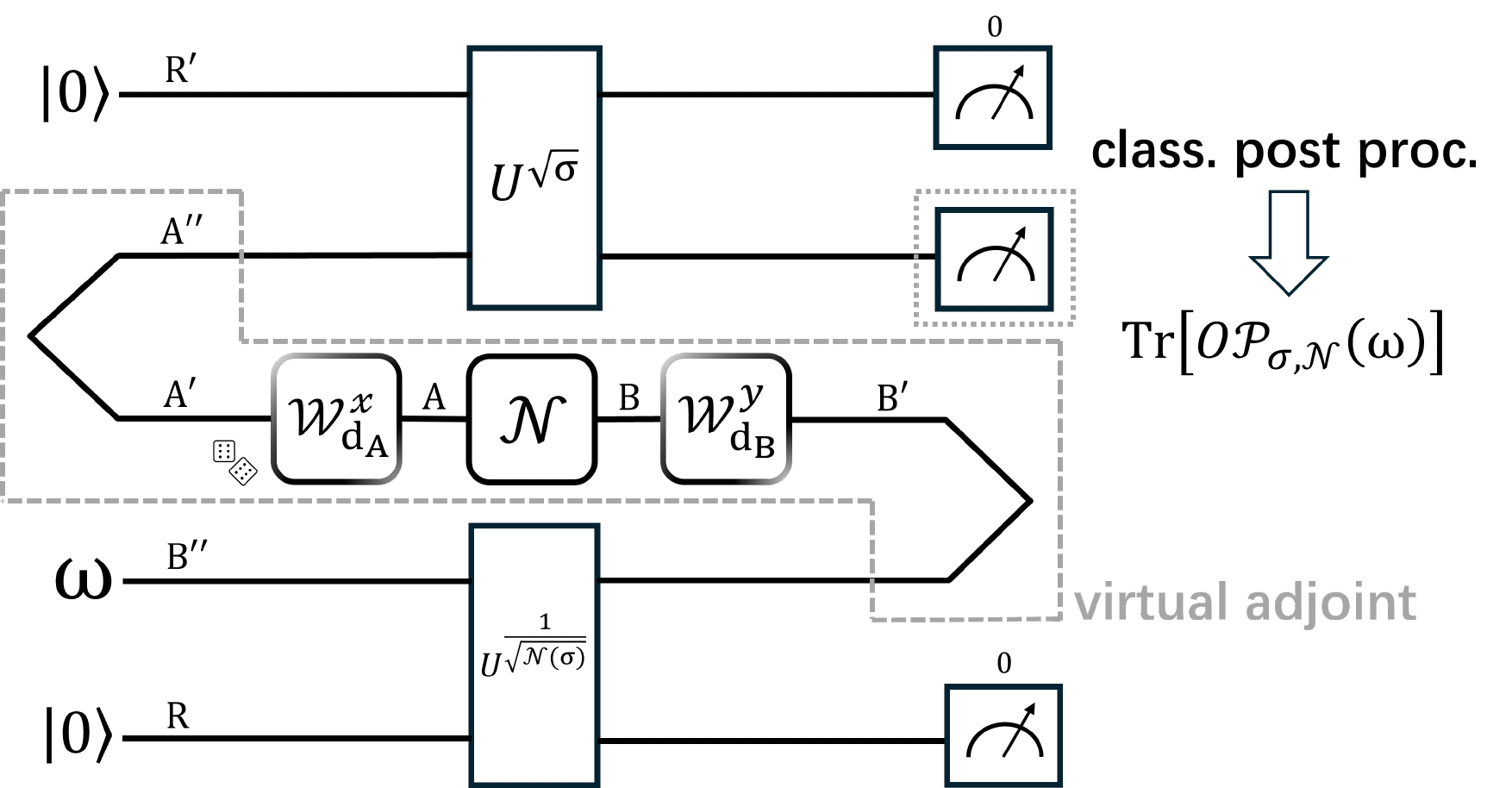}
    \caption{
    Protocol for estimating $\tr[O_A\cP_{\sigma,\cN}(\omega_B)]$ of the Petz recovery map. The scheme integrates the sampling-based simulation of the channel adjoint $\cN^\dagger$ and probabilistic transposition, with block-encodings of $\sigma^{1/2}$ and $\cN(\sigma)^{-1/2}$.}
    \label{fig:petzmap}
\end{figure}

The key conceptual difference from Ref.~\cite{gilyen2022quantum} is that their coherent implementation of $\cN^\dag$ relies on a unitary dilation of $\cN$. In contrast, here $\cN^\dag$ is accessed through our virtual-adjoint procedure derived from Theorem~\ref{thm:conj_HPTP} combined with the probabilistic transpose. Recently, Ref.~\cite{utsumi2025} developed an algorithm to implement the Uhlmann transformation, which can be utilized to realize the Petz map of an unknown channel approximately and deterministically, based on the framework in Ref.~\cite{gilyen2022quantum}. This could be used to achieve our task of estimating $\tr[O_A \cP_{\sigma,\cN}(\omega_B)]$ for unknown channels. We will compare the query complexity of the unknown channel of their method and ours in the following parts.

Specifically, Ref.~\cite{utsumi2025} developed an algorithm to realize the Petz map $\cP_{\sigma,\cN}$ with error $\varepsilon_0$ in the diamond norm distance. This approximation requires $O\left(\frac{d_A^{11/2}d_B^{5/2}}{\varepsilon_0^2\lambda_{\min}^{3/2} } \min{\left\{ \frac{1}{\tau^2_{\min}}, \frac{d_A^8 d_B^4}{\varepsilon_0^4 \lambda_{\min}^2 }\right\}}\right)$ queries to the unknown channel $\cN_{A\to B}$~\cite[Eq.~(336)]{utsumi2025}, where $\lambda_{\min}$ is the minimum non-zero eigenvalue of $\cN(\sigma)$ and $\tau_{\min}$ is the minimum non-zero eigenvalue of the Choi state $\sfN_{AB}/d_A$. As noted in Ref.~\cite{utsumi2025}, implementing steps i) and iii) of the Petz map additionally requires access to $\sigma_A$ and $\cN_{A\to B}$, whereas these can be straightforwardly realized using the density matrix exponentiation~\cite{lloyd2014quantum} and the QSVT. Therefore, the query complexity regarding $\cN_{A\to B}$ discussed here is primarily about step ii), i.e., the realization of $\cN^\dag$.

To benchmark our protocol against this approach for the task of expectation value estimation, we consider an estimator $\hat{\xi}$ constructed by sampling from their approximate channel, denoted as $\cP^{\varepsilon_0}_{\sigma,\cN}$. For an observable $O_A$ with $\|O_A\|_{\infty}\le 1$, obtaining an estimate with precision $\varepsilon$ and confidence $1-\delta$ requires setting the internal channel simulation error to $\varepsilon_0 = \varepsilon/2$ to bound the bias, while using $O\big(\frac{1}{\varepsilon^2}\log(\frac{1}{\delta})\big)$ samples to suppress statistical fluctuation. 
This error budget is justified as follows. First, notice that 
\begin{equation*}
\begin{aligned}
\Big|\!\tr\!\big[O_A\cP_{\sigma,\cN}(\omega_B)\big]-\hat{\xi}\Big| &\leq \Big|\!\tr\!\big[O_A(\cP_{\sigma,\cN}-\cP_{\sigma,\cN}^{\varepsilon/2})(\omega_B)\big]\Big| \\
&\quad + \Big|\! \tr\!\big[O_A\cP_{\sigma,\cN}^{\varepsilon/2}(\omega_B)\big] -\hat{\xi}\Big|\\
&\leq \big\|(\cP_{\sigma,\cN}-\cP_{\sigma,\cN}^{\varepsilon/2})(\omega_B)\big\|_1\\
&\quad + \Big|\! \tr\!\big[O_A\cP_{\sigma,\cN}^{\varepsilon/2}(\omega_B)\big] -\hat{\xi}\Big|,
\end{aligned}
\end{equation*}
where we have used the triangle inequality and the H\"older inequality in the first and the second inequality, respectively. Then, by the inclusion of events, it follows that
\begin{equation}
\begin{aligned}
&\Pr\Big(\big|\!\tr\!\big[O_A\cP_{\sigma,\cN}(\omega_B)\big]-\hat{\xi}\big|\leq \varepsilon\Big)\\
\geq &\Pr\Big(\big|\!\tr\!\big[O_A\cP^{\varepsilon/2}_{\sigma,\cN}(\omega_B)\big]-\hat{\xi}\big|\leq \frac{\varepsilon}{2} \ \cap \\
&\quad\quad\quad \big\|\cP_{\sigma,\cN}(\omega_B)- \cP^{\varepsilon/2}_{\sigma,\cN}(\omega_B)\big\|_1\leq \frac{\varepsilon}{2}\Big)\\
=& \Pr\Big(\big|\!\tr\!\big[O_A\cP^{\varepsilon/2}_{\sigma,\cN}(\omega_B)\big]-\hat{\xi}\big|\leq \frac{\varepsilon}{2}\Big) \geq 1-\delta,
\end{aligned}
\end{equation}
where the first inequality relies on the fact that satisfying both the bias and statistical constraints guarantees the total error bound through triangle inequality, and the equality holds because $\big\|\cP_{\sigma,\cN}(\omega_B)- \cP^{\varepsilon/2}_{\sigma,\cN}(\omega_B)\big\|_1\leq \frac{\varepsilon}{2}$ is satisfied deterministically by the channel construction. Since each sample requires a fresh realization of the adjoint channel, the total query complexity for $\cN_{A\to B}$ via this direct reduction scales as
\begin{equation}\label{eq:scaling_uhlmann}
O\left(\frac{d_A^{11/2}d_B^{5/2}}{\varepsilon^4\lambda_{\min}^{3/2} } \log\left(\frac{1}{\delta}\right)\min{\left\{ \frac{1}{\tau^2_{\min}}, \frac{d_A^8 d_B^4}{\varepsilon^4 \lambda_{\min}^2 }\right\}}\right).
\end{equation} 
We note that this scaling represents an upper bound on the cost, derived from utilizing the channel simulation results of Ref.~\cite{utsumi2025} as a straightforward subroutine for expectation value estimation.

In the following, we analyze the query complexity of the unknown channel $\cN_{A\to B}$ for estimating $\tr[O_A\cP_{\sigma,\cN}(\omega_B)]$ using our method. Notice that our method requires a post-selection arising from the transpose realization, which means we need to reject samples during our process. The success or acceptance across each sampling is independent, so the number of accepted samples $N$ after $M$ attempts follows a binomial distribution $N\sim \cB(M,\eta)$. First, consider the post-selection probability in realizing the channel transpose part. When $(x,y)=(+,+)$, we have
\begin{equation}
p_{\text{suc}} = \frac{\tr\Big[ \frac{1}{d_A+1}\big(\idop + \cN(\frac{\idop}{d_A})^{\pT}\big)K\Big]}{d_B\cdot\tr K} \geq \frac{1}{d_B(d_A+1)},
\end{equation}
where $K \coloneqq \cN(\sigma)^{-1/2}\omega\cN(\sigma)^{-1/2}$. When $(x,y)=(+,-)$ or $(-,-)$, we have
\begin{equation}
\begin{aligned}
p_{\text{suc}} &= \frac{\tr\Big[ \frac{1}{d_A-1}\big(\idop - \cN(\frac{\idop}{d_A}\big)^{\pT}\big)K\Big]}{d_B\cdot\tr K}\\
&=\frac{1}{d_B(d_A-1)} - \frac{1}{d_Ad_B(d_A-1)} \frac{\tr [\cN(\idop)^{\pT} K]}{\tr K}\\
&\geq \frac{1}{d_B(d_A-1)}(1-\zeta_{\max}),
\end{aligned} 
\end{equation}
where $\zeta_{\max}$ is the maximum eigenvalue of $\cN(\frac{\idop}{d_A})$.
Denote
\begin{equation}
    \eta \coloneqq \min\Big\{\frac{1}{d_B(d_A+1)},\frac{1-\zeta_{\max}}{d_B(d_A-1)}\Big\}.
\end{equation}
Now consider event $A$ of fewer than $n$ accepted samples with $\Pr(A)\leq \delta/2$, and event $B$ of given $n$ accepted samples, the estimate error exceeds $\varepsilon$ with $\Pr(B)\leq \delta/2$. It follows that the overall failure probability $\Pr(A\cup B)\leq \Pr(A)+\Pr(B)\leq \delta$. Let's consider $n$ success samples and let $\Pr(N < n)\leq \delta/2$. The Chernoff bound~\cite{Chernoff1952} gives
\begin{equation}
    \Pr(N\leq (1-\beta)M\eta )\leq \exp\Big(-\frac{\beta^2 M\eta}{2}\Big),~\beta\in(0,1).
\end{equation}
Set $(1-\beta)M\eta = n$, so $\beta = 1-\frac{n}{M\eta}$. We require $\exp\big(-\frac{\beta^2 M\eta}{2}\big) \leq \frac{\delta}{2}$. Taking logarithms and solving the quadratics with respect to $M\eta$ yields 
\begin{equation}
M\eta \geq n+\log(2/\delta) + \sqrt{\log(2/\delta)^2 + 2n\log(2/\delta)}.
\end{equation}
Based on the above, a sample size of
\begin{equation}
\begin{aligned}
    M = O\left(\frac{d_A^2d_B^2}{\eta \varepsilon^2}\log\left(\frac{1}{\delta}\right)\right)
\end{aligned}
\end{equation}
suffices to obtain an $\varepsilon$-close estimate of $\tr[O_A \cP_{\sigma,\cN}(\omega_B)]$ with probability at least $1-\delta$. In particular, when $\cN$ is a unital channel, we have $M = O\left(\frac{d_A^3d_B^3}{\varepsilon^2}\log\left(\frac{1}{\delta}\right)\right)$, which significantly improves Eq.~\eqref{eq:scaling_uhlmann}.

\section{Conclusion}

In this work, we have established a strict hierarchy for the universal physical realizability of dual quantum maps. While the channel transpose admits a probabilistic implementation via teleportation, we proved that neither the complex conjugate nor the adjoint can be realized by any completely positive supermap. We overcame this fundamental obstruction by utilizing virtual combs, which allow for the universal simulation of these maps via quasi-probability sampling. We proved the optimality of our construction with respect to the sampling overhead. As a key application, this framework enables the estimation of the Petz recovery map for unknown channels with a query complexity of $O(\varepsilon^{-2})$, offering an improvement over existing deterministic approximations. We expect our protocol for simulating channel adjoint can lead to more applications in experimental probing of OTOCs in generic open quantum systems~\cite{Li2017,Tuziemski2019,Alonso2019,Blocher2022,Xu2024}, i.e., $\tr[\cN^\dag(W)V^\dag \cN(W)V\rho]$ where $V,W$ are some Hermitian and $\cN$ is the underlying dynamic of the open quantum system. Future work may extend these virtual techniques to multi-slot settings and broader classes of higher-order transformations in quantum thermodynamics~\cite{Stratton2024} and error mitigation~\cite{Minjun2026}.

\section*{Acknowledgments}
C.\ Z.\ thanks Zhenhuan Liu for informing the reference~\cite{Dong_2019}. X.\ W.\ was partially supported by the National Key R\&D Program of China (Grant No.~2024YFB4504004), the National Natural Science Foundation of China (Grant No.~12447107, 92576114), the Guangdong Provincial Quantum Science Strategic Initiative (Grant No.~GDZX2403008, GDZX2503001), the CCF-Tencent Rhino-Bird Open Research Fund, and the Guangdong Provincial Key Lab of Integrated Communication, Sensing and Computation for Ubiquitous Internet of Things (Grant No.~2023B1212010007). 
G.\ B.\ acknowledges support from the Start-up Fund (Grant No.~G0101000274) from The Hong Kong University of Science and Technology (Guangzhou).
Y.\ L.\ was supported in part by the National Nature Science Foundation of China under Grant 62302346; in part by Hubei Provincial Nature Science Foundation of China under Grant 2024AFA045; and in part by the Fundamental Research Funds for the Central Universities under Grant 2042025kf0023.

\bibliography{arxiv_v1}

\appendix

\vspace{4cm}
\onecolumngrid

\begin{center}
\large{\textbf{Appendix}}
\end{center}


\section{Proof of Lemma~\ref{lemma:basenorm=diamondnorm} and Theorem~\ref{thm:min_basenorm}}\label{app:proof}

\renewcommand\theproposition{\ref{lemma:basenorm=diamondnorm}}
\setcounter{proposition}{\arabic{proposition}-1}
\begin{lemma}
    For any $n$-slot virtual comb $\widetilde{\cC}$, it holds that $\|\widetilde{\cC}\|_{\bdia} = \|\widetilde{\cC}\|_{\dia}$.
\end{lemma}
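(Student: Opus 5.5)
We prove the two inequalities separately, working throughout with Choi operators. A virtual comb $\widetilde{\sfC}$ is Hermitian and satisfies the linear normalization constraints of an $n$-slot comb, but it need not be positive. Write $\mathsf{Comb}_n$ for the set of Choi operators $\sfC\ge 0$ obeying the comb constraints (partial traces over $F$, then $I_n$, and so on, each reducing to the identity tensored with the previous level).

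\emph{Step 1: $\|\widetilde{\cC}\|_{\dia}\le\|\widetilde{\cC}\|_{\bdia}$.} Take an optimal decomposition $\widetilde{\sfC}=\alpha_+\sfC_+-\alpha_-\sfC_-$ with $\sfC_\pm\in\mathsf{Comb}_n$, and fix any tester operator $\Xi^{(n)}$. The triangle inequality gives
\[
\big\|(\idop\ox\sqrt{\Xi^{(n)}})\widetilde{\sfC}(\idop\ox\sqrt{\Xi^{(n)}})\big\|_1\le \alpha_+\tr[\sfC_+(\idop\ox\Xi^{(n)})]+\alpha_-\tr[\sfC_-(\idop\ox\Xi^{(n)})].
\]
Here we used that each conjugated term is positive, so its trace norm equals its trace. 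It remains to check the standard fact that a comb paired with a tester has unit total probability: $\tr[\sfC(\idop_F\ox\Xi^{(n)})]=1$. This follows by induction, peeling off the slots one at a time using the comb and tester normalization conditions alternately. Maximizing over $\Xi^{(n)}$ gives the bound.

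\emph{Step 2: $\|\widetilde{\cC}\|_{\bdia}\le\|\widetilde{\cC}\|_{\dia}$.} We use conic duality. The base norm is the optimal value of
\[
\min\ \tr[(\sfP+\sfQ)(\idop\ox\Xi_0)] \quad\text{s.t.}\quad \widetilde{\sfC}=\sfP-\sfQ,\ \ \sfP,\sfQ\ge 0,\ \ \sfP,\sfQ\in\mathrm{cone}(\mathsf{Comb}_n).
\]
Here $\mathrm{cone}(\mathsf{Comb}_n)$ consists of positive operators satisfying the comb constraints up to a scalar $\alpha_\pm$. The scalar equals $\tr[\sfP(\idop\ox\Xi_0)]$ for any fixed tester $\Xi_0$, so the objective is $\alpha_++\alpha_-$. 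This is an SDP. The comb cone is dual to the cone of operators of the form $\idop_F\ox\Xi$ with $\Xi$ an unnormalized tester, up to the orthogonal complement of the affine constraints.

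Writing the Lagrangian, the dual program should read
\[
\max\ \tr[\widetilde{\sfC}\,Y] \quad\text{s.t.}\quad -\idop\ox\Xi^{(n)}\le Y\le \idop\ox\Xi^{(n)},\ \ \Xi^{(n)}\ \text{a normalized tester}.
\]
The coefficient matching uses the fact that $\widetilde{\sfC}$ itself lies in the affine span of combs, so terms from the complement vanish. Strong duality holds by Slater's condition: $\sfP=\lambda\,\idop/\dim+\widetilde{\sfC}_+$-type strictly feasible choices exist, for example $\sfP=\widetilde{\sfC}+t\,\sfC_0$ and $\sfQ=t\,\sfC_0$, with $\sfC_0$ the maximally mixed comb and $t$ large.

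Finally, we identify the dual with the operational diamond norm. For fixed $\Xi$, substitute $Y=(\idop\ox\sqrt{\Xi})Z(\idop\ox\sqrt{\Xi})$ with $-\idop\le Z\le\idop$; the support issue is handled by restricting to $\mathrm{supp}\,\Xi$ or by perturbing $\Xi$ to be full rank and taking a limit. Then
\[
\max_Z \tr\big[(\idop\ox\sqrt{\Xi})\widetilde{\sfC}(\idop\ox\sqrt{\Xi})Z\big]=\big\|(\idop\ox\sqrt{\Xi})\widetilde{\sfC}(\idop\ox\sqrt{\Xi})\big\|_1 .
\]

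The main obstacle is the duality step. We must derive the dual cone of $n$-slot combs precisely, as the cone generated by $\{\idop_F\ox\Xi^{(n)}\}$ plus the annihilator of the comb constraints, and verify that the dual variables collapse to a single tester operator rather than a weaker object. We would invoke the known characterization of combs and testers (Chiribella et al.; Gutoski's strategy duality) for this. An alternative route to the same inequality is hyperplane separation: if $\widetilde{\cC}/\lambda$ lies outside the convex set $\mathrm{conv}(\mathsf{Comb}_n\cup -\mathsf{Comb}_n)$, a separating witness $Y$ can be normalized so that $|\tr[\sfC Y]|\le 1$ for all combs. Such a $Y$ is then shown to satisfy $\pm Y\le\idop\ox\Xi$ using the same dual characterization.
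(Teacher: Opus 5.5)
Your argument is correct. It uses the same pair of SDPs as the paper but runs the duality in the opposite direction.

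\textbf{The paper's route.} It starts from the tester program \eqref{eq:supermap_dianorm_seq} for $\|\widetilde{\cC}\|_{\dia}$ and imports its dual \eqref{eq:supermap_dianorm_dual_seq}, together with strong duality, from Gutoski. It then observes that this dual \emph{is} the base-norm program \eqref{eq:sdp_sample_seq} under the substitution $\sfC_{\pm}=(\sfS\pm\widetilde{\sfC})/2$, $\alpha_{\pm}=(p\pm1)/2$, with inverse $\sfS=\sfC_{+}+\sfC_{-}$, $p=\alpha_{+}+\alpha_{-}$. Both inequalities follow at once.

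\textbf{Your route.} You prove $\|\widetilde{\cC}\|_{\dia}\le\|\widetilde{\cC}\|_{\bdia}$ by an elementary weak-duality argument: the triangle inequality plus unit normalization of comb--tester pairings. You then dualize the base-norm program yourself. Your dual, $\max\tr[\widetilde{\sfC}Y]$ subject to $-\idop\ox\Xi\le Y\le\idop\ox\Xi$, is exactly \eqref{eq:supermap_dianorm_seq} with $\sfT^{(0)},\sfT^{(1)}=(\idop\ox\Xi\pm Y)/2$. Your substitution $Y=(\idop\ox\sqrt{\Xi})Z(\idop\ox\sqrt{\Xi})$, with the support argument, supplies the link to the operational formula that the paper takes from Chiribella et al. Your Slater point $\sfP=\widetilde{\sfC}+t\sfC_0$, $\sfQ=t\sfC_0$ is valid.

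\textbf{Trade-off.} Once the cited dual is accepted, the paper needs only a change of variables. Your route has to compute the dual cone of the comb cone, which makes the comb--tester duality explicit.

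\textbf{One point to tighten.} Let $L$ be the linear span of the comb constraints and $K=\mathscr{P}\cap L$. The dual cone is $K^*=\mathscr{P}+L^\perp$, not ``tester operators plus $L^\perp$''. No closure is needed, since $L$ contains a positive definite comb.

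The collapse to a single tester that you flag as the main obstacle does go through:
\begin{itemize}
\item The dual constraints say $\idop\ox\Xi_0- Y=R_{1}+M_{1}$ and $\idop\ox\Xi_0+ Y=R_{2}+M_{2}$, with $R_i\ge0$ and $M_i\in L^\perp$.
\item Set $Y'=(R_2-R_1)/2$ and $W=(R_1+R_2)/2$.
\item Then $Y-Y'\in L^\perp$. Since $\widetilde{\sfC}\in L$, this gives $\tr[\widetilde{\sfC}Y']=\tr[\widetilde{\sfC}Y]$.
\item Moreover $W-Y'=R_1\ge0$ and $W+Y'=R_2\ge0$, so $\pm Y'\le W$.
\item Adding the two constraints shows $W\in(\idop\ox\Xi_0+L^\perp)\cap\mathscr{P}$.
\end{itemize}
The only external input is that this last set is exactly $\{\idop_F\ox\Xi^{(n)}\}$ over normalized testers, i.e., the dual-affine characterization of combs. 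That is the same fact underlying the Gutoski dual the paper cites.
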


\begin{proof}
This generalizes the result for HPTP linear maps given in~\cite[Theorem 3]{regula2021operational}, and the proof directly utilizes the SDP formulation of two norms. 
Let us use the notation $_{B}\sfX_{AB} \coloneqq \tr_B \sfX_{AB} \ox \frac{\idop_B}{d_B}$. Note that the diamond norm of an $n$-slot virtual comb $\widetilde{\sfC}$ can be formulated as the following SDP problem~\cite[Eq.~(19)]{Chiribella_2008m}: 
\begin{equation}
\begin{aligned}\label{eq:supermap_dianorm_seq}
 \|\widetilde{\cC}\|_\dia = \max &\; \big\langle \sfT^{(0)}-\sfT^{(1)},  \widetilde{\sfC}\big\rangle\\
{\rm s.t.} &\;\; \sfT^{(0)},\sfT^{(1)}  \geq 0,\\
&\;\; \sfT=\sfT^{(0)}+\sfT^{(1)},\\
&\; \tr \sfT = d_{I_1}d_{I_2}\cdots d_{F},~\sfT = _{F}\!\!\sfT, \\
&\;\; _{O_{n-k}\dots I_nO_nF}\sfT = _{I_{n-k}O_{n-k}\dots I_nO_nF}\!\!\sfT,~\forall k\in\{ 1,\dots,n-1\}.
\end{aligned}
\end{equation}
It follows that the dual of \eqref{eq:supermap_dianorm_seq} is given by (cf.~\cite[Appendix A.3]{gutoski2012measure})
\begin{equation}
\begin{aligned}\label{eq:supermap_dianorm_dual_seq}
\|\widetilde{\cC}\|_\dia = \min &\;\; p\\
{\rm s.t.} &\; -\sfS\leq \widetilde{\sfC} \leq \sfS,~ _{F}\sfS = _{O_nF}\!\sfS, \\
&\;\;  _{O_{n-k-1}I_{n-k}\dots I_nO_nF}\sfS = _{I_{n-k}\dots I_nO_nF}\!\sfS,~\forall k \in\{1,\dots,n-1\},\\
&\;\; \tr\sfS = p d_{PI_1O_1\dots I_nO_n},
\end{aligned}
\end{equation}
and the strong duality holds.
Meanwhile, we write the base norm explicitly as
\begin{equation}\label{eq:sdp_sample_seq}
\begin{aligned}
\|\widetilde{\cC}\|_{\bdia}= \min &\;\; \alpha_++\alpha_- \\
{\rm s.t.} &\;\; \sfC_{+},\sfC_{-} \geq 0,\\
&\;\; \widetilde{\sfC} = \sfC_{+} - \sfC_{-},\\
&\;\; \forall k \in\{1,\dots,n-1\}, x \in\{+,-\}:\\
&\;\; _{F}\sfC_{x} = _{O_nF}\sfC_{x}, \\
&\;\;  _{O_{n-k-1}I_{n-k}\dots I_nO_nF}\sfC_{x} = _{I_{n-k}\dots I_nO_nF}\sfC_{x},\\
&\; \tr_{I_1 \cdots I_nF} \sfC_{x} = \alpha_x\idop_{PO_1\cdots O_n}.\\
\end{aligned}
\end{equation}
For any feasible pair $(p,\sfS)$ of~\eqref{eq:supermap_dianorm_dual_seq}, we note that $p\geq 1$ because $\sfS \geq \widetilde{\sfC}$ gives $\tr \sfS \geq \tr\widetilde{\sfC} = d_{PI_1O_1\dots I_nO_n}$. 
Then we can construct $(\alpha_+,\alpha_-,\sfC_{+},\sfC_{-}) \coloneqq (\frac{p+1}{2}, \frac{p-1}{2}, \frac{\sfS+\widetilde{\sfC}}{2},\frac{\sfS-\widetilde{\sfC}}{2})$ which can be checked to be a feasible tuple of~\eqref{eq:sdp_sample_seq}. Conversely, for any feasible tuple $(\alpha_+,\alpha_-,\sfC_{+},\sfC_{-})$ of \eqref{eq:sdp_sample_seq}, we have that the pair $(p,\sfS) \coloneqq (\alpha_++\alpha_-,\sfC_{+}+\sfC_{-})$ is a feasible pair of ~\eqref{eq:supermap_dianorm_dual_seq}. Since the objective functions are the same for both problems, we can conclude that \eqref{eq:supermap_dianorm_dual_seq} and \eqref{eq:sdp_sample_seq} have the same optimal value, and hence $\|\widetilde{\cC}\|_{\bdia} = \|\widetilde{\cC}\|_{\dia}$.
\end{proof}
\renewcommand{\theproposition}{S\arabic{proposition}}

\renewcommand\theproposition{\ref{thm:min_basenorm}}
\setcounter{proposition}{\arabic{proposition}-1}
\begin{theorem}
The minimum base norm of a 1-slot virtual comb that satisfies $\widetilde{\cC}(\cN)=\cN^*,~\forall \cN\in\CPTP(A,B)$ is $d_Ad_B-d_A+1$.
\end{theorem}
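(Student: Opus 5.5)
Upper and lower bounds will be proved separately, both through the SDP characterization of Lemma~\ref{lemma:basenorm=diamondnorm}. For a 1-slot comb with $P=A'$, $I_1=A$, $O_1=B$, $F=B'$, the base norm is
\[
\min\{\alpha_++\alpha_-:\widetilde{\sfC}=\sfC_+-\sfC_-,\ \sfC_\pm\ge 0,\ \tr_{B'}\sfC_\pm = \tr_{BB'}\sfC_\pm\ox\idop_B/d_B,\ \tr_{AB'}\sfC_\pm=\alpha_\pm\idop_{A'B}\}.
\]
We minimize it over all $\widetilde{\sfC}$ satisfying $\tr_{AB}[\widetilde{\sfC}(\sfN^{\pT}_{AB}\ox\idop_{A'B'})]=\sfN^{\pT}_{A'B'}$ for every CPTP $\cN$. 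This is a single SDP with value $\gamma^\star$, and we aim to show $\gamma^\star=d_Ad_B-d_A+1=\gamma$.

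\textbf{Upper bound.} We use the decomposition in Eq.~\eqref{Eq:virtualC}. Group the positive terms as $\alpha_+\cC_+=\frac{d_B+1}{2}\sfC^{(1)}+\frac{(d_A-1)(d_B-1)}{2}\sfC^{(2)}$ and the negative one as $\alpha_-\cC_-=\frac{d_A(d_B-1)}{2}\sfC^{(3)}$. Each $\sfC^{(i)}$ is a product of Werner--Holevo Choi operators, so it is a genuine comb with normalization $\alpha=1$. The total weight is therefore $\alpha_++\alpha_-=\frac{d_B+1+(d_A-1)(d_B-1)+d_A(d_B-1)}{2}=d_Ad_B-d_A+1$, which is exactly $\gamma$ as already used in Eq.~\eqref{eq:sample_probs}.

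\textbf{Lower bound.} First we symmetrize. The constraint set and the objective are invariant under the twirl $\sfC\mapsto (\bar U_{A'}\ox U_A\ox \bar V_B\ox V_{B'})\,\sfC\,(\cdots)^\dagger$: the condition $\widetilde{\cC}(\cN)=\cN^*$ is covariant, because replacing $\cN$ by $\mathcal{V}\circ\cN\circ\mathcal{U}$ maps the conjugate to the conjugated unitaries on the other side, and comb constraints are preserved by local unitaries. Averaging over Haar $U,V$, we may assume any optimal $\sfC_\pm$ lie in the commutant of $U\ox\bar U$ on $A'A$ and $V\ox \bar V$ on $BB'$. In the Choi convention used here, the relevant invariants should be the partial transposes of $\{\idop,F\}$, i.e.\ spanned by $\idop,\Phi$ on each pair. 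I would check which convention makes $\widetilde{\sfC}$ in Eq.~\eqref{Eq:virtualC} invariant and adapt accordingly, since $\widetilde{\sfC}$ is written with $F$. After this step, $\sfC_\pm$ become four-parameter operators
\[
a\,\idop\idop+b\,\idop X_{BB'}+c\,X_{A'A}\idop+e\,X_{A'A}X_{BB'},\qquad X\in\{F,\Phi\}.
\]
Both the universality condition (imposed for all CPTP $\cN$) and the comb conditions then reduce to a few linear equations in these parameters. Positivity reduces to nonnegativity on the four products of projectors $P^{s/a}\ox P^{s/a}$.

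The remaining step is to solve the resulting small linear program. I expect this to be the main obstacle. The universality requirement, which must hold for all CPTP $\cN$ rather than all linear maps, fixes $\widetilde{\sfC}$ only up to terms annihilated by CPTP Choi operators. This leaves freedom, for example in the $\idop_{A'A}\ox(\cdot)$ component, which the LP can exploit. Rather than solve the LP directly, I would exhibit a dual certificate. Concretely, I would give a feasible point $\sfT^{(0)},\sfT^{(1)}$ for the diamond-norm SDP in Eq.~\eqref{eq:supermap_dianorm_seq}, i.e.\ a tester. I would try $\sfT^{(0)}$ supported on $P^s\ox P^s$ and $P^a\ox P^a$, and $\sfT^{(1)}$ supported on $P^s\ox P^a$, weighted so that the objective equals $\gamma$ for every admissible $\widetilde{\sfC}$. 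For this to work, the tester must pair to zero with the residual freedom, i.e.\ with any $\sfD$ satisfying $\tr_{AB}[\sfD(\sfN^{\pT}\ox\idop)]=0$ for all CPTP $\cN$. Verifying that orthogonality, together with the tester normalization, is the crux. If it succeeds, weak duality gives $\|\widetilde{\cC}\|_{\bdia}\ge\gamma$ for every admissible $\widetilde{\cC}$, and combined with the upper bound this proves the theorem.
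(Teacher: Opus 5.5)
Your upper bound is correct and is the paper's (the two positive Werner--Holevo terms give $\alpha_+$, the third gives $\alpha_-$). The gap is the lower bound, which is the actual content of the theorem. You neither solve the LP nor exhibit a tester, and you leave the decisive verification open (``If it succeeds\dots''). What you do sketch has two concrete defects.

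First, $\bar U_{A'}\otimes U_A\otimes \bar V_B\otimes V_{B'}$ is not a symmetry of $\tr_{AB}[\widetilde{\sfC}(\sfN_{AB}^{\pT}\otimes\idop_{A'B'})]=\sfN^{\pT}_{A'B'}$. The correct group is $U_{A'}\otimes U_A\otimes V_B\otimes V_{B'}$, which amounts to replacing $\sfN$ by $(U^{\pT}\otimes V^{\pT})\sfN(\bar U\otimes\bar V)$. Its commutant is spanned by $\{\idop,F_{A'A}\}\otimes\{\idop,F_{BB'}\}$, i.e.\ by the projectors $P^x_{A'A}\otimes P^y_{BB'}$. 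This is consistent with Eq.~\eqref{Eq:virtualC} and with your positivity reduction, not with $\Phi$.

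Second, a tester must satisfy $\sfT^{(0)}+\sfT^{(1)}=\idop_{B'}\otimes\Xi_{A'AB}$ with $\tr_B\Xi=\idop_A\otimes\rho_{A'}$. For an invariant tester this forces $\Xi\propto\idop_{A'AB}$. Your proposed support pattern puts no weight on $P^a\otimes P^s$, so it is not a valid tester. Also, do not take feasibility in the printed tester SDP at $n=1$ as sufficient. That program lacks the condition that $\tr_{BB'}(\sfT^{(0)}+\sfT^{(1)})$ be of the form $\idop_A\otimes\rho_{A'}$, and a point feasible only for a relaxation certifies no lower bound.

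The missing step is short and needs no duality. Take any feasible decomposition. Since $\tr\sfC_\pm=\alpha_\pm d_Ad_B$, you get $\alpha_++\alpha_-\ge\|\widetilde{\sfC}\|_1/(d_Ad_B)$; this is exactly the uniform tester $\Xi=\idop_{A'AB}/(d_Ad_B)$ in $\|\cdot\|_{\dia}$. Twirling over the correct group does not increase $\|\cdot\|_1$ and maps admissible $\widetilde{\sfC}$ to admissible invariant ones.

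An invariant operator $r_1\idop+r_2\,\idop_{A'A}\otimes F_{BB'}+r_3\,F_{A'A}\otimes\idop_{BB'}+r_4\,F_{A'A}\otimes F_{BB'}$ sends $\sfN$ to $(d_Ar_1+r_3)\idop_{A'B'}+r_2\,\idop_{A'}\otimes\cN(\idop)^{\pT}+r_4\,\sfN^{\pT}$. Testing replacement channels and one non-replacement channel forces $r_4=1$, $r_2=0$, $r_3=-d_Ar_1$. Set $\lambda=r_1$. The coefficients of $\widetilde{\sfC}_\lambda$ on $P^s\otimes P^s$, $P^s\otimes P^a$, $P^a\otimes P^s$, $P^a\otimes P^a$ are then $c_{ss}=1-\lambda(d_A-1)$, $c_{sa}=-1-\lambda(d_A-1)$, $c_{as}=\lambda(d_A+1)-1$, $c_{aa}=1+\lambda(d_A+1)$. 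Hence $\|\widetilde{\sfC}_\lambda\|_1/(d_Ad_B)=\sum_{x,y}|c_{xy}|\,t^A_xt^B_y$ with $t_s=(d+1)/2$, $t_a=(d-1)/2$. This function is convex and piecewise linear in $\lambda$. Its slope is $-(d_A^2-1)$ just left of $\lambda=1/(d_A+1)$ and $(d_A^2-1)(d_B-1)/2$ just right of it. So its minimum is $d_Ad_B-d_A+1$, attained at the paper's construction.

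The paper instead reduces ``for all CPTP $\cN$'' to finitely many constraints via a Gell-Mann basis of $\ker\tr_B$. It then exhibits explicit Lagrange multipliers: $X=Z^{(i)}=\idop/(d_Ad_B)$, $Y^{(i)}=\idop$, and $G_j$ proportional to the basis elements. Your symmetry route, once completed as above, is more elementary and explains why the optimal certificate is uniform. As submitted, though, it does not yet prove the lower bound.
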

\begin{proof}
The minimum base norm of a 1-slot virtual comb that can realize the complex conjugate for any quantum channel is given by the following SDP.
\begin{align}
    \nu(d_A,d_B) \coloneqq \min &\;\; p_1 + p_2\notag \\
    {\rm s.t.} &\;\; \sfC^{(1)}_{A'ABB'} \geq 0, ~\sfC^{(2)}_{A'ABB'} \geq 0,\\
    &\;\; \tr_{B'} \sfC^{(i)}_{A'ABB'} = \tr_{BB'} \sfC^{(i)}_{A'ABB'} \ox \frac{\idop_{B}}{d_{B}},\\
    &\;\; \tr_{AB'} \sfC^{(i)}_{A'ABB'} = p_i\idop_{A'B},~\forall i\in\{1,2\},\\
    &\;\; \tr_{AB} \Big[(\sfC^{(1)}_{A'ABB'}-\sfC_{A'ABB'}^{(2)})(\sfN_{AB}^{\pT}\ox \idop_{A'B'})\Big] = \sfN_{AB}^{\pT},~ \forall \cN_{A\to B}.\label{sdpcond:transpose}
\end{align}
Define $\mathscr{T}_{AB}\coloneqq \{X\in\mathscr{L}^{\dag}_{AB}:\tr_B X =\idop_A\}$ as an affine subspace of $\mathscr{L}^{\dag}_{AB}$, the space of Hermitian operators on system $AB$. We have that 
\begin{equation}\label{Eq:TP_equiv}
\mathscr{T}_{AB} = \Big\{\frac{1}{d_B}\idop_{AB} + \Delta:\Delta\in\mathscr{L}^{\dag}_{AB}, \tr_B\Delta = 0\Big\} = \frac{1}{d_B}\idop_{AB} + \ker(\tr_B).    
\end{equation} 
Noticing that $\dim\mathscr{L}^{\dag}_{AB}= d_A^2 d_B^2$ and $\mathrm{rank}(\tr_B) = d_A^2$, we have that $\dim \ker(\tr_B)=d_A^2(d_B^2-1)$ and let $\{\Delta_j\}_{j=1}^{d_A^2(d_B^2-1)}$ be a basis of $\ker(\tr_B)$. It follows that for any Choi operator $\sfN_{AB}$ of a quantum channel $\cN_{A\to B}$, it can be written as $\sfN_{AB}=\idop_{AB}/d_B + \sum_{j=1}^{d_A^2(d_B^2-1)} \alpha_j \Delta_j$. Now denote the condition in Eq.~\eqref{sdpcond:transpose} as
\begin{equation}
f(\cdot) = \tr_{AB} \Big[(\sfC^{(1)}_{A'ABB'}-\sfC_{A'ABB'}^{(2)})((\cdot)^{\pT}\ox \idop_{A'B'})\Big] - (\cdot)^{\pT}.
\end{equation}
Suppose it satisfies that $f(\Delta_j) = 0,~\forall j\in\{1,2,\dots,d_A^2(d_B^2-1)\}$ and $f(\idop_{AB}/d_B)=0$. By the linearity of $f(\cdot)$, it follows that for any Choi operator $\sfN_{AB}$ of a quantum channel $\cN_{A\to B}$, we have that $f(\sfN_{AB})=f\big(\idop_{AB}/d_B + \sum_{j} \alpha_j \Delta_j\big) = 0$. On the other hand, suppose it satisfies that $f(\sfN_{AB})=0,~\forall \sfN_{AB}\in\mathscr{T}_{AB}\cap\mathscr{P}_{AB}$. Noticing that $\idop_{AB}/d_B, \idop_{AB}/d_B \pm \ve \Delta_j \in\mathscr{T}_{AB}\cap\mathscr{P}_{AB}$ for sufficiently small $\ve$ and any $\Delta_j$, we have that $f(\idop_{AB}/d_B) = 0$ and $f(\Delta_j) = 0,\forall j\in\{1,2,\dots,d_A^2(d_B^2-1)\}$. Therefore, we conclude that the condition of $f(\sfN_{AB}) = 0,\forall \cN_{A\to B}$ is equivalent to 
\begin{equation}
f(\sfD_{AB}) = 0 \quad \text{and} \quad f(\Delta_j=0),\forall j\in\{1,2,\dots,d_A^2(d_B^2-1)\},    
\end{equation}
where $\cD$ is the fully depolarizing channel. We can rewrite 
\begin{equation}\label{sdp:basis_primal}
\begin{aligned}
    \nu(d_A,d_B) = \min &\;\; p_1 + p_2 \\
    {\rm s.t.} &\;\; \sfC^{(1)}_{A'ABB'} \geq 0, ~\sfC^{(2)}_{A'ABB'} \geq 0,\\
    &\;\; \tr_{B'} \sfC^{(i)}_{A'ABB'} = \tr_{BB'} \sfC^{(i)}_{A'ABB'} \ox \frac{\idop_{B}}{d_{B}},\\
    &\;\; \tr_{AB'} \sfC^{(i)}_{A'ABB'} = p_i\idop_{A'B},~\forall i\in\{1,2\},\\
    &\;\; \tr_{AB} \Big[(\sfC^{(1)}_{A'ABB'}-\sfC_{A'ABB'}^{(2)})(\sfD_{AB}^{\pT}\ox \idop_{A'B'})\Big] = \sfD_{AB}^{\pT},\\
    &\;\; \tr_{AB} \Big[(\sfC^{(1)}_{A'ABB'}-\sfC_{A'ABB'}^{(2)})(\Delta_j^{\pT}\ox \idop_{A'B'})\Big] = \Delta_j^{\pT},~j\in\{1,2,\dots,d_A^2(d_B^2-1)\}.
\end{aligned}
\end{equation}
According to Eq.~\eqref{Eq:Choi_C123} in Theorem~\ref{thm:conj_HPTP}, by choosing
\begin{equation}\label{eq:primal_optimal}
\begin{aligned}
    & \widetilde{\sfC}^{(1)}_{A'ABB'} \coloneqq \frac{d_B+1}{2}\sfW_{d_A}^{+} \ox \sfW_{d_B}^{+} + \frac{(d_A-1)(d_B-1)}{2} \sfW_{d_A}^{-} \ox \sfW_{d_B}^{-},\\
    & \widetilde{C}^{(2)}_{A'ABB'} \coloneqq \frac{d_A(d_B-1)}{2} \sfW_{d_A}^{+} \ox \sfW_{d_B}^{-},\\
    & \hat{p}_1 \coloneqq \frac{d_Ad_B - d_A + 2}{2},\\
    & \hat{p}_2 \coloneqq \frac{d_A(d_B-1)}{2},
\end{aligned}
\end{equation}
we have that $\{\widetilde{\sfC}^{(1)}_{A'ABB'},\widetilde{\sfC}^{(2)}_{A'ABB'},\hat{p}_1,\hat{p}_2\}$ is a feasible solution to the SDP~\eqref{sdp:basis_primal}. Therefore, we conclude that 
\begin{equation}\label{Eq:nu_leq}
    \nu(d_A,d_B) \leq \frac{d_Ad_B - d_A + 2}{2} + \frac{d_A(d_B-1)}{2} = d_Ad_B - d_A + 1.
\end{equation}
In the following, we will use the SDP duality to prove the converse. The Lagrangian of the problem~\eqref{sdp:basis_primal} is given by
\begin{equation}
\begin{aligned}
\mathcal{L} = &\ p_1 + p_2 + \sum_{i=1}^2 \left\langle Y_{A'AB}^{(i)},\ \tr_{B'} \sfC^{(i)} - \tr_{BB'} \sfC^{(i)} \ox \frac{\idop_{B}}{d_{B}} \right\rangle \\
&+ \sum_{i=1}^2 \left\langle Z_{A'B}^{(i)},\ \tr_{AB'} \sfC^{(i)} - p_i \idop_{A'B} \right\rangle \\
&+ \left\langle X_{A'B'},\ \sfD_{AB}^\pT - \tr_{AB}\left[(\sfC^{(1)} - \sfC^{(2)})(\sfD_{AB}^\pT \ox \idop_{A'B'}) \right] \right\rangle \\
&+ \sum_{j=1}^{d_A^2(d_B^2-1)} \left\langle G_j,\ \Delta_j^{\pT} - \tr_{AB}\left[(\sfC^{(1)} - \sfC^{(2)})(\Delta_j^\pT \ox \idop_{A'B'}) \right] \right\rangle\\
=&\ \tr[X\sfD_{AB}^\pT] + \sum_{j=1}^{d_A^2(d_B^2-1)} \tr[G_j\Delta_j^\pT] + p_1 (1-\tr Z^{(1)}) + p_2 (1-\tr Z^{(2)}) \\
&+ \Big\langle \sfC^{(1)},\ Y^{(1)}_{A'AB}\ox \idop_{B'} -  \frac{1}{d_{B}}Y^{(1)}_{A'A}\ox \idop_{BB'} + Z^{(1)}_{A'B}\ox \idop_{AB'} - \sfD_{AB}^\pT \ox X_{A'B'} - \sum_j\Delta_j^\pT \ox G_j\Big\rangle\\
&+ \Big\langle \sfC^{(2)},\ Y^{(2)}_{A'AB}\ox \idop_{B'} -  \frac{1}{d_{B}}Y^{(2)}_{A'A}\ox \idop_{BB'} + Z^{(2)}_{A'B}\ox \idop_{AB'} + \sfD_{AB}^\pT \ox X_{A'B'} + \sum_j\Delta_j^\pT \ox G_j\Big\rangle,
\end{aligned}
\end{equation}
where $Y^{(1)}_{A'AB},Y^{(2)}_{A'AB}\in \mathscr{L}^{\dag}_{A'AB},~Z_{A'B}^{(1)},Z_{A'B}^{(2)}\in\mathscr{L}^{\dag}_{A'B},X_{A'B'},G_j\in\mathscr{L}^\dag_{A'B'}$ are dual variables. It follows that the dual problem is given by
\begin{equation}\label{sdp:basis_dual}
\begin{aligned}
    \max &\;\; \tr\left[X_{A'B'}\sfD_{AB}^\pT\right] + \sum_{j=1}^{d_A^2(d_B^2-1)} \tr\left[G_j\Delta_j^\pT\right] \\
    {\rm s.t.} &\; \tr Z_{A'B}^{(1)} = 1, ~\tr Z_{A'B}^{(2)} = 1,\\
    &\;\; Y^{(1)}_{A'AB}\ox \idop_{B'} - \frac{1}{d_{B}}Y^{(1)}_{A'A}\ox \idop_{BB'} + Z^{(1)}_{A'B}\ox \idop_{AB'}\geq \sfD_{AB}^\pT \ox X_{A'B'} + \sum_j\Delta_j^\pT \ox G_j,\\
    &\;\;  -Y^{(2)}_{A'AB}\ox \idop_{B'} +  \frac{1}{d_{B}}Y^{(2)}_{A'A}\ox \idop_{BB'} - Z^{(2)}_{A'B}\ox \idop_{AB'}\leq  \sfD_{AB}^\pT \ox X_{A'B'} +\sum_j\Delta_j^\pT \ox G_j.
\end{aligned}
\end{equation}
For the primal problem, let
\begin{equation}
    \begin{aligned}
         \bar{\sfC}^{(1)}_{A'ABB'} &= F_{A'A}\ox F_{BB'} + \epsilon \idop_{A'ABB'} > 0,\\
         \bar{\sfC}^{(2)}_{A'ABB'} &= \epsilon \idop_{A'ABB'} >0, \text{ for }\epsilon > 1,\\
    \end{aligned}
\end{equation}
then $\{\bar{\sfC}^{(1)}_{A'ABB'},\bar{\sfC}^{(2)}_{A'ABB'},\bar{p}_1 = 1+\epsilon d_Ad_B, \bar{p}_2 = \epsilon d_Ad_B\}$ is a strictly feasible solution. This indicates that the strong duality holds by Slater's condition~\cite{ponstein2004approaches}. For the basis $\{\Delta_j\}_{j=1}^{d_A^2(d_B^2-1)}$ on system $A$ and $B$, we may choose the generalized Gell-Mann matrices~\cite{bertlmann2008bloch} as generators, given by
\begin{equation}
\begin{aligned}
    &\Lambda_0 \coloneqq \frac{\idop_d}{\sqrt{d}}, \\
    &\Lambda_{m} \coloneqq \sqrt{\frac{1}{m(m+1)}} \left(\sum_{j=1}^m \ketbra{j}{j} - m\ketbra{m+1}{m+1}\right), \quad 1\leq m  \leq d-1, \\
    &\Lambda^s_{jk} \coloneqq \frac{1}{\sqrt 2}(\ketbra{j}{k} - \ketbra{k}{j}) , \quad\quad~ 1\leq j \leq k \leq d, \\
    &\Lambda^a_{jk} \coloneqq \frac{1}{\sqrt 2}(-i\ketbra{j}{k} +i \ketbra{k}{j}) , \quad 1\leq j \leq k \leq d. 
\end{aligned}
\end{equation}
Denote 
\begin{equation}
    \big\{L_d^{(n)}\big\}_{n=0}^{d^2-1} = \{\Lambda_m\}_{m=0}^{d-1} \cup \{\Lambda^s_{jk}\}_{j<k} \cup \{\Lambda^a_{jk}\}_{j<k}
\end{equation}
with $L_d^{(0)} = L_0 $. Then we choose $\Delta_j$ to be
\begin{equation}
    \big\{\Delta_j\big\}_{j=1}^{d_A^2(d_B^2-1)} \coloneqq \left\{L_{A}^{(\alpha)} \ox L_{B}^{(\beta)}: 0\leq \alpha \leq d_A^2-1, 1\leq \beta \leq d_B^2-1 \right\}.
\end{equation}
Note that for $\{\Delta_j\}$, they are i). Hermitian: $\Delta_j = \Delta_j^\dagger$, ii). orthonormal: $\tr[\Delta_i^\dagger \Delta_j] = \delta_{ij}$, iii). $\tr_A[\Delta_j] = 0$ except $L_{A}^{(\alpha)} = L_{A}^{(0)}$.
Based on this basis, we construct the following operators to form a feasible solution of the dual SDP~\eqref{sdp:basis_dual}:
\begin{equation}
\label{eq:feasible_sol_dual}
    \begin{aligned}
    X_{A'B'} &= \frac{\idop _{A'B'}}{d_{A'}d_{B'}} , \\
    G_j &= \begin{cases}
        \frac{d_A+d_B}{d_A(d_B^2+d_B)} \Delta_j \ & \text{   if } \tr_A[\Delta_j] \neq 0,\\
        \frac{1}{d_A(d_B+1)}  \Delta_j  & \text{  otherwise},
    \end{cases} \\
    Z^{(1)}_{A'B} &= Z^{(2)}_{A'B} = \frac{1}{d_{A'}d_{B}}\idop_{A'B} ,\\ 
    Y^{(1)}_{A'AB} &= Y^{(2)}_{A'AB} =  \idop_{A'AB}.
\end{aligned}
\end{equation}
The constraints are checked to be satisfied by the following calculation. Write the summation explicitly, we have that
\begin{equation}
\begin{aligned}
&\sum_j\Delta_j^\pT \ox G_j \\
= &\;    \frac{d_A + d_B}{d_A(d_B^2 + d_B)} \frac{F_{A'A}}{d_A} \ox \Big( F_{BB'} - \frac{\idop_{BB'}}{d_B} \Big) + \frac{1}{d_A(d_B + 1)}  \left[ F_{A'A} \ox \Big( F_{BB'} - \frac{\idop_{BB'}}{d_B} \Big) - \frac{F_{A'A}}{d_A} \ox \Big( F_{BB'} - \frac{\idop_{BB'}}{d_B} \Big) \right] \\
=&\;\frac{1}{d_A(d_B+1)} \Big( F_{A'A} + \frac{\idop_{A'A}}{d_B} \Big) \ox \Big( F_{BB'} - \frac{\idop_{BB'}}{d_B} \Big).
\end{aligned}
\end{equation}
The eigenvalues of $\sum_j\Delta_j^\pT \ox G_j$ are 
\begin{equation}
\begin{aligned}
\lambda_1 &= \frac{1}{d_A(d_B+1)} \left(1 - \frac{1}{d_B^2}\right) = \frac{d_B - 1}{d_A d_B^2}, \\
\lambda_2& = \frac{-1}{d_A(d_B+1)}\left(1 + \frac{1}{d_B}\right)^2 = -\frac{d_B + 1}{d_A d_B^2}, \\
\lambda_3& = \frac{-1}{d_A(d_B+1)}\left(1 - \frac{1}{d_B}\right)^2.
\end{aligned}
\end{equation}
Clearly, $\lambda_1$ is the largest eigenvalue and $\lambda_2$ is the smallest eigenvalue. Hence, we conclude that 
\begin{equation}
    \frac{d_B - 1}{d_A d_B^2}\idop_{A'ABB'} \geq \sum_j\Delta_j^\pT \ox G_j \geq -\frac{d_B + 1}{d_A d_B^2}\idop_{A'ABB'}.
\end{equation}
It follows that \eqref{eq:feasible_sol_dual} is a feasible solution since the constraint can be expressed as
\begin{equation}
\begin{aligned}
& Y^{(1)}_{A'AB}\ox \idop_{B'} - \frac{1}{d_{B}}Y^{(1)}_{A'A}\ox \idop_{BB'} + Z^{(1)}_{A'B}\ox \idop_{AB'}-\sfD_{AB}^\pT \ox X_{A'B'} \\
=&\;  0+ \left(\frac{1}{d_A d_B} -\frac{1}{d_A d_B^2} \right)\idop_{A'ABB'} = \frac{d_B - 1}{d_A d_B^2}\idop_{A'ABB'}\geq\sum_j\Delta_j^\pT \ox G_j,
\end{aligned}
\end{equation}
and 
\begin{equation}
\begin{aligned}
& -Y^{(2)}_{A'AB}\ox \idop_{B'} + \frac{1}{d_{B}}Y^{(2)}_{A'A}\ox \idop_{BB'} - Z^{(2)}_{A'B}\ox \idop_{AB'}-\sfD_{AB}^\pT \ox X_{A'B'} \\
=&\;  0+ \left(-\frac{1}{d_A d_B} -\frac{1}{d_A d_B^2} \right)\idop_{A'ABB'} = -\frac{d_B +1}{d_A d_B^2}\idop_{A'ABB'}\leq\sum_j\Delta_j^\pT \ox G_j.
\end{aligned}
\end{equation}
For the feasible solution given in Eq.~\eqref{eq:feasible_sol_dual}, the objective value is
\begin{equation}
\begin{aligned}
&\tr\left[X_{A'B'}\sfD_{AB}^\pT\right] + \sum_{j=1}^{d_A^2(d_B^2-1)} \tr\left[G_j\Delta_j^\pT\right] \\
=&\; \frac{1}{d_B}+\frac{d_A+d_B}{d_A(d_B^2+d_B)} (d_B^2-1)  + \frac{1}{d_A(d_B+1)} \left[d_A^2(d_B^2-1) - (d_B^2-1)\right] \\
=&\; \frac{1}{d_B}+\frac{(d_A + d_B)(d_B-1)}{d_Ad_B} + \frac{(d_A^2-1)(d_B-1)}{d_A} \\
=&\; \frac{1}{d_B}+ \frac{d_Ad_B-d_A+d_B^2-d_B + d_A^2d_B^2-d_A^2d_B-d_B^2+d_B}{d_Ad_B} \\
=&\; \frac{1}{d_B}+1-\frac{1}{d_B} +d_Ad_B - d_A\\
=&\; d_Ad_B - d_A +1.
\end{aligned}
\end{equation}
Since the dual problem is a maximization problem, we therefore conclude that 
\begin{equation}\label{Eq:nu_geq}
    \nu(d_A,d_B) \geq d_Ad_B - d_A +1.
\end{equation}
Combining Eq.~\eqref{Eq:nu_leq} and Eq.~\eqref{Eq:nu_geq}, we complete the proof.
\end{proof}
\renewcommand{\theproposition}{S\arabic{proposition}}

\end{document}